\documentclass[10pt]{article} % For LaTeX2e

\usepackage{amsthm}
\newtheorem{lemma}{Lemma}
\usepackage{hhline}
\usepackage[accepted]{tmlr}
\usepackage{amsmath,amsfonts,bm}

\def\eqref#1{equation~\ref{#1}}
\def\1{\bm{1}}

\DeclareMathAlphabet{\mathsfit}{\encodingdefault}{\sfdefault}{m}{sl}
\SetMathAlphabet{\mathsfit}{bold}{\encodingdefault}{\sfdefault}{bx}{n}

\usepackage{url}
\usepackage{graphicx}

\definecolor{cvprblue}{rgb}{0.21,0.49,0.74}
\definecolor{darkgreen}{HTML}{006400} % nice X
\definecolor{babyblue}{rgb}{0.68, 0.85, 0.9} % define babyblue color
\usepackage{booktabs}
\definecolor{AlgBack}{RGB}{240,240,240} % light gray for algorithm background

\usepackage[table]{xcolor}
\usepackage{graphicx}
\usepackage{adjustbox}
\usepackage{makecell}

\newcommand{\tstrut}{\rule{0pt}{2.6ex}}
\newcommand{\bstrut}{\rule[-0.9ex]{0pt}{0pt}}

\usepackage{algpseudocode}
\usepackage{graphicx}
\usepackage{booktabs,adjustbox,amssymb} % \checkmark
\usepackage{pifont}                     % for \ding{55}
\newcommand{\xmark}{\ding{55}}   
\definecolor{darkgreen}{HTML}{006400} % nice X

\usepackage{booktabs}
\usepackage{multirow}
\usepackage{makecell}
\usepackage{siunitx} % optional, if you later want aligned numbers
\usepackage{booktabs}   % For professional-looking rules (\toprule, \midrule, \bottomrule)
\usepackage{tabularx}   % For equal-width columns that fill the linewidth
\usepackage[table]{xcolor} % For cell/row coloring
\usepackage{enumitem}   % For customizing lists (to make them compact)
\definecolor{babyblue}{RGB}{230, 245, 255} % A very light, almost white blue
\usepackage{algorithm}

\usepackage[pagebackref,breaklinks,colorlinks,allcolors=cvprblue]{hyperref}

\title{Amnesia: A Stealthy Replay Attack on Continual Learning Dreams}

\author{\name Ahmed Sharshar \email ahmed.sharshar@mbzuai.ac.ae \\
      \addr Department of Computer Vision\\
      Mohamed bin Zayed University of Artificial Intelligence, AbuDhabi, UAE
      \AND
      \name Naveen Kumar Kummari
      \email naveen.kummari@mbzuai.ac.ae \\
      \addr Department of Machine Learning\\
      Mohamed bin Zayed University of Artificial Intelligence, AbuDhabi, UAE
      \AND
      \name Mohsen Guizani 
      \email mohsen.guizani@mbzuai.ac.ae \\
      \addr Department of Machine Learning\\
      Mohamed bin Zayed University of Artificial Intelligence, AbuDhabi, UAE}

\def\month{06}  % Insert correct month for camera-ready version
\def\year{2026} % Insert correct year for camera-ready version
\def\openreview{\url{https://openreview.net/forum?id=QSTg7z06GH}} % Insert correct link to OpenReview for camera-ready version

\begin{document}

\maketitle

\begin{abstract}
Continual learning (CL) models rely on experience replay to mitigate catastrophic forgetting, yet their robustness to replay sampling interference is largely unexplored. Existing CL attacks mostly modify inputs or update pipelines (poisoning/backdoors) and lack explicit \emph{auditable} constraints, limiting their realism. Here, \emph{auditability} means that a monitor can verify compliance using sampler-visible telemetry, e.g., logged replay index/label statistics, by checking that the realized replay class histogram stays close to a nominal baseline and that the replay rate is unchanged (per-batch and/or over a rolling window). We study a limited-privilege insider controlling only the replay \emph{index selection}, not pixels, labels, or model parameters, while staying within such auditable limits (e.g., queue priorities). We introduce \textbf{Amnesia}, a replay composition attack maximizing model degradation under two auditable budgets: a visibility budget $\delta$ bounding the $\mathrm{TV}/\mathrm{KL}$ divergence from a nominal class histogram $p_0$, and a mass budget $f$ fixing the replay rate. Amnesia uses a two-step procedure: (i) compute lightweight class utilities (e.g., EMA loss/confidence) to tilt $p_0$ toward harmful classes; (ii) project the tilt back into the $\delta$-ball using efficient $\mathrm{KL}$ (\emph{exponential tilt}) or $\mathrm{TV}$ (\emph{balanced mass redistribution}) optimizers. A windowed scheduler enforces rolling audits. Across challenging CL benchmarks (Split CIFAR-10/100, CORe50, Tiny-ImageNet) and strong replay baselines (ER, ER-ACE, SCR, DER++), Amnesia consistently depresses final accuracy (ACC$\downarrow$) and worsens backward transfer ($-\mathrm{BWT}\uparrow$). The $\mathrm{KL}$ variant achieves high impact while remaining largely undetected by audits, as confirmed empirically under multiple audit schemes (per-batch and rolling-window checks), whereas the $\mathrm{TV}$ variant is more damaging but more easily detected, especially under tight per-class constraints. These results expose \emph{index-only} replay control as a practical, auditable threat surface in CL systems and establish a principled impact-visibility-budget trade-off. Code is available anonymously via \href{https://github.com/ahmed-sharshar/Amensia}{ GitHub}.
\end{abstract}

\section{Introduction}
\label{sec:intro}

\begin{figure*}[h!]
    \centering
    \includegraphics[width=0.95\textwidth]{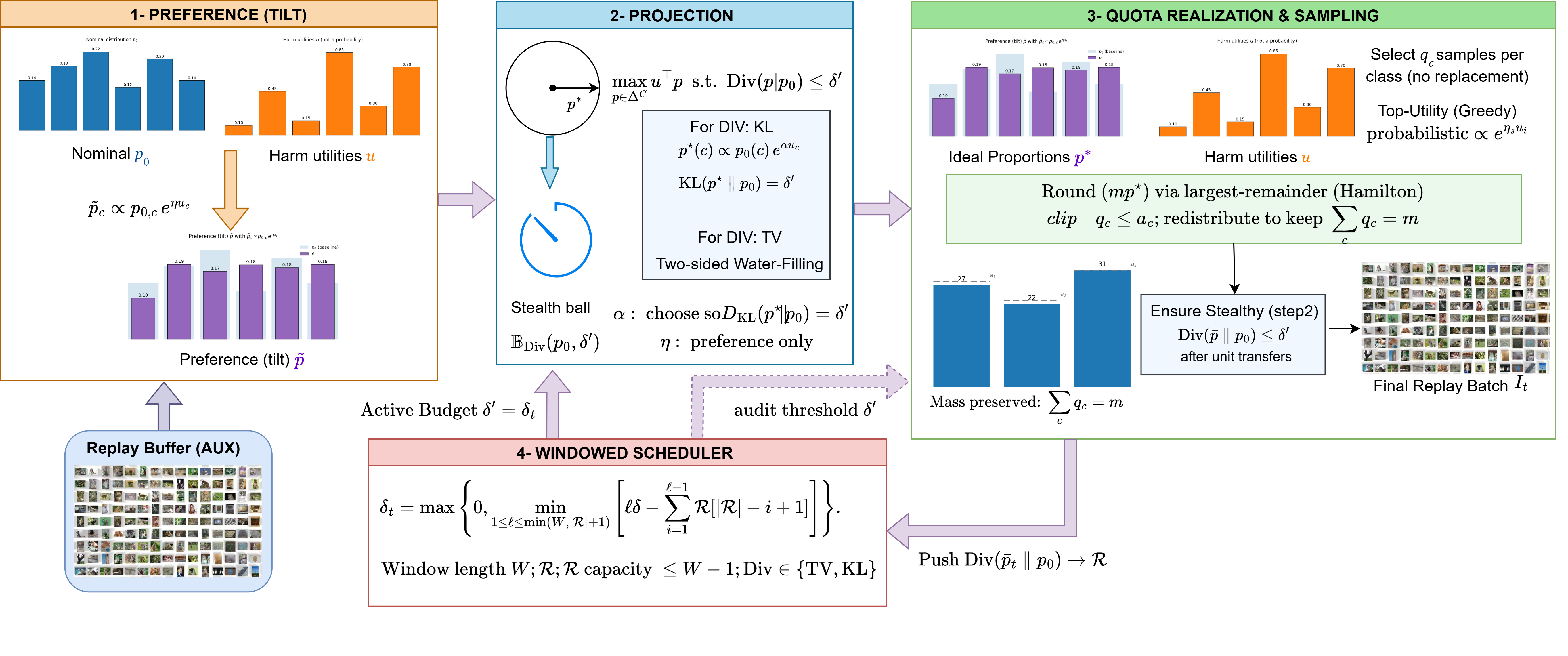}
    \caption{\textbf{Amnesia attack overview.} Four-stage pipeline: (1) \emph{Preference}: tilt the nominal class histogram \(p_0\) using harm utilities \(u\) to obtain \(\tilde p\). (2) \emph{Projection}: map into the stealth (divergence) ball \(\mathbb{B}_{\mathrm{Div}}(p_0,\delta')\) (total variation / Kullback-Leibler; TV/KL) to get \(p^\star\). (3) \emph{Quota \& sampling}: round \(m p^\star\) to integer quotas \(q\), clip/audit to keep \(\mathrm{Div}(\bar p\|p_0)\le \delta'\), then sample the batch indices \(I_t\). (4) \emph{Windowed scheduler}: a ring buffer \(\mathcal R\) sets the active budget over a rolling window \(W\).}
    \label{fig:overview}
\end{figure*}

Continual learning (CL) aims to sequentially adapt to evolving data while avoiding catastrophic forgetting \citep{Chaudhry2019AGEM,Kirkpatrick2017EWC}. Managing the plasticity-stability trade-off \citep{DeLange2021Survey} underpins improved forward and backward transfer (FWT/BWT) \citep{LopezPaz2017GEM,DeLange2021Survey}. This capability is crucial for non-stationary applications such as robotics \citep{Ye2025OnlineTaskFree} and spans task, class, and data-incremental regimes \citep{vandeven2022three} (e.g., distinct tasks, expanding label sets, or distribution shift under a fixed label set). Replay methods interleave a small memory of past examples with current data to mitigate forgetting \citep{vandeven2022three}, often outperforming regularization alternatives \citep{Chaudhry2019TinyMemories,LopezPaz2017GEM} and proving effective in reinforcement learning \citep{schaul2016prioritized}. Yet the \emph{robustness} of CL pipelines to malicious interference remains underexplored.

In deployed CL systems, replay is often implemented as a sampler/data-service that is instrumented for reliability and compliance: it logs replay index sets, per-batch label counts, queue/priority metadata (e.g., in prioritized replay), and the replay rate (resource usage). Such signals enable lightweight \emph{audits} that verify simple invariants without inspecting model parameters or the live stream, e.g., that the replay keep fraction stays fixed and that the replay label histogram remains close to a nominal baseline derived from the buffer. We call constraints of this form \emph{auditable budgets} because they are (i) defined on telemetry that is naturally logged in production pipelines and (ii) verifiable post hoc (per batch and/or over rolling windows) via automated compliance checks. We address the above robustness gap by formalizing these auditable budgets and proposing a \emph{sampler-level}, divergence-constrained replay composition attack operating under explicit visibility and mass limits—to the best of our knowledge, the first to target replay \emph{index selection} under such audit-aligned constraints.

A growing body of work attacks CL via \emph{poisoning} (distribution perturbations) and \emph{backdoors} (input-space triggers). Poisoning aims to degrade overall performance, whereas backdoors implant dormant triggers for targeted failures. For example, \textit{BrainWash} shows that poisoning the \emph{current} task can erase knowledge of \emph{past} tasks \citep{Abbasi2024BrainWash}; biased synthetic samples can subvert generative replay \citep{Kang2023PoisonReplay}; and task-specific poisoning degrades regularization-based CL by exploiting stability assumptions \citep{Han2023ICIP}. Backdoor variants such as \emph{PACOL} \citep{PACOL2023} and persistent attacks across task sequences \citep{Guo2025PersistentBackdoor} leverage low-intensity triggers or temporally embedded patterns. However, these methods typically omit \emph{practical} constraints that are central in monitored deployments \citep{Steinhardt2017Poisoning,Jagielski2018Manipulating}: (i) an \emph{auditable visibility budget} bounding divergence (\(\delta\)) between the attacked replay histogram and a nominal baseline \(p_0\) (as checked from selection logs), and (ii) a \emph{mass budget} (\(f\)) limiting the replay rate (as checked from resource/throughput accounting). Such budgets are key for realistic, stealthy attacks \citep{Namkoong2017DRO,Sinha2018DRO}.

We introduce \textbf{Amnesia}, a \emph{divergence-constrained, sampler-level} replay composition attack (overview in Fig.~\ref{fig:overview}). Rather than corrupting data, losses, or parameters, Amnesia biases \emph{which indices} are drawn from the buffer (the ``dreaming'' stage), reflecting practical threats (e.g., a compromised index-selection service, flipped metadata priority flags, or tampered pseudorandom number generator (PRNG) seeds for prioritized replay \citep{schaul2016prioritized}). Importantly, the sampler need not query the model for real-time per-index losses: the utility signals it consumes can be lightweight, lagged telemetry (e.g., EMA loss/confidence) exported asynchronously by the training service and logged as buffer metadata. We pose the attack as:
\begin{equation}
\label{eq:pscdt_objective}
\max_{p\in\Delta^C}\ \langle u,p\rangle
\quad \text{s.t.}\quad
\mathrm{Div}(p\|p_0)\le \delta,\ \ \mathrm{Div}\in\{\mathrm{TV},\mathrm{KL}\},
\end{equation}
\noindent where \(C\) is the number of classes, \(u\) is the attacker’s per-class harm utility, and \(\delta\) is the maximum allowed total variation (TV) distance or Kullback-Leibler (KL) divergence from \(p_0\) (the visibility budget). This is coupled with a mass budget: a keep fraction \(f\) that fixes the total replay mass \(m=\lfloor f\,n_{\text{aux}}\rfloor\), where \(n_{\text{aux}}\) is the size of the auxiliary replay buffer. The optimizer's solution, \(p^\star\), is then realized as integer quotas \(q\) (with \(\sum_c q_c = m\)); the resulting normalized histogram \(\bar p=q/m\) is what must pass the audit. The optimizer is agnostic to how \(u\) is estimated (e.g., class-wise exponential moving average (EMA) losses, misclassification sensitivity, or forgetting proxies). \emph{Remark:} if \(u\) is uninformative (e.g., constant across classes) or \(\delta=0\), then \(p^\star=p_0\) and the attack reduces to the nominal replay sampler.

This formulation yields efficient, exact optimizers: a KL \emph{single-tilt} (exponential tilt with a one-dimensional search for the budget-saturating multiplier) and a TV \emph{two-sided water-filling} (balanced mass redistribution between low- and high-utility classes). The attack is designed to maximize forgetting on past tasks while preserving current-task accuracy for stealth, and we evaluate it along three axes: \emph{impact} (backward transfer, BWT; forgetting, FGT (average accuracy drop on past tasks)), \emph{visibility} (\(\delta\)), and \emph{budget} (\(f\)). Under fixed replay mass \(m\), shifting replay probability toward attacker-designated classes necessarily reduces rehearsal for other classes, which is the basic mechanism by which forgetting can be amplified.

Conceptually, Amnesia tilts sampling toward attacker-designated harmful classes while enforcing \(\delta\) and \(f\). The nominal histogram \(p_0\) may be the instantaneous empirical distribution or a moving average, enabling audits per batch or over a rolling window \(W\). A \emph{windowed scheduler} based on a ring buffer \(\mathcal R\) tightens the per-step budget to an active \(\delta_t\le\delta\) so that any window of length \(L\le W\) stays within the global radius while meeting integer quotas. This preference-under-constraints view connects replay sampling to our baselines, \textbf{PO} (Preference-Only tilt) and \textbf{PrO} (Projection-Only fairness/coverage via KL/TV projection around \(p_0\))---but \emph{inverts the goal under the same audit-aligned constraints}: rather than \emph{defending} by minimizing worst-case risk within a divergence ball or enforcing coverage, we \emph{attack} by maximizing targeted harm while remaining compliant with logging-based replay audits (rate and histogram checks). By construction, our optimizer \emph{weakly dominates} any other feasible sampler-level attack (including simple budget-aware greedy heuristics) at an equivalent \((f,\delta)\), and \emph{strictly dominates} when \(u\) is non-uniform and \(\delta>0\). In summary, our contributions are as follows:

\begin{itemize}[leftmargin=1.2em]
    \item \textbf{Amnesia:} a practical \emph{sampler-level} replay composition attack that steers class proportions by manipulating replay index selection, leaving pixels, loss computations, and model parameters untouched.
    \item \textbf{Principled formulation:}
    (i) \emph{Attack formulation under auditable budgets:} an explicit divergence-constrained optimization over replay proportions \(p\in\Delta^C\) with a \emph{visibility budget} \(\mathrm{Div}(p\|p_0)\le\delta\) (TV/KL distance from the nominal replay histogram) and a \emph{mass budget} \(f\) fixing the replay rate \(m=\lfloor f\,n_{\text{aux}}\rfloor\);
    (ii) \emph{Audit realization on what monitors can check:} a sampler-side realization pipeline that converts \(p^\star\) into integer quotas \(q\), handles availability constraints, and performs post-rounding \emph{audit-and-fix} so the \emph{realized} replay histogram \(\bar p=q/m\) satisfies \(\mathrm{Div}(\bar p\|p_0)\le \delta'\), with a windowed scheduler enforcing rolling-window audits.
    \item \textbf{Comprehensive evaluation:} experiments on Split CIFAR-10/100, CORe50, and Tiny-ImageNet, against strong replay baselines (ER, ER-ACE, SCR, DER++), systematically analyzing the \emph{impact--visibility--budget} trade-off, damage-per-budget, and attack detectability under multiple audit schemes.
\end{itemize}

\section{Related Work}
\label{sec:related-work}

In continual learning (CL), models learn from streams of tasks or shifting distributions while mitigating catastrophic forgetting \citep{vanDeVen2022ThreeTypes,Kirkpatrick2017EWC}. As reviewed in §\ref{sec:intro}, canonical settings are \emph{task-incremental} (task identity known at test), \emph{domain-incremental} (also known as (a.k.a.) data-incremental in some works; shared label space with distribution shift), and \emph{class-incremental} (recognition over all seen classes without task labels) \citep{vanDeVen2022ThreeTypes,DeLange2022SurveyTPAMI,Cossu2022IsClassIncrementalEnough}. CL is also studied in \emph{online, single-pass} streams, especially with many small tasks \citep{Chaudhry2019TinyMemories}. From the evolving accuracy matrix, \emph{Average Accuracy (ACC)}, \emph{Backward Transfer (BWT)}, and \emph{Forward Transfer (FWT)} (typically vs.\ random initialization) capture final performance, forgetting, and positive transfer \citep{DiazRodriguez2018MetricsCL,Hou2024CLIR}. Method families include \emph{memory-based (replay)}, \emph{regularization-based}, and \emph{architectural/parameter isolation}, with replay dominant in vision; small buffers curb forgetting and are maintained via \emph{reservoir sampling} or \emph{ring-buffer} updates \citep{DeLange2022SurveyTPAMI,Chaudhry2019TinyMemories,Mai_OnlineCL_Repo}. In practice, replay batches are formed by a sampler module (often implementing class-balancing or prioritized replay), which exposes a natural control point for attacks that manipulate \emph{which} buffer indices are rehearsed. Strong baselines such as \emph{GDumb} highlight buffer composition, while distillation-enhanced replay (e.g., \emph{DER/DER++}) stabilizes predictions; regularization (e.g., elastic constraints) and associated analyses illuminate forgetting \citep{Prabhu2020GDumb,der,Kirkpatrick2017EWC,Huszar2018EWCNote,Shen2023AssociativeLearning}. The objective is high ACC, non-negative BWT, and positive FWT under realistic memory/compute budgets \citep{DiazRodriguez2018MetricsCL,DeLange2022SurveyTPAMI}.

Sequential training opens attack surfaces beyond static settings: (i) \textbf{poisoning}: insert or modify stream samples to degrade retention or bias behavior (targeted erasure \citep{TargetedPoisoningCL2022}; \textsc{BrainWash} uses norm-bounded current-task perturbations and particularly harms Elastic Weight Consolidation (EWC) \citep{Abbasi2024BrainWash,Kirkpatrick2017EWC,Huszar2018EWCNote}); small budgets suffice via sequential amplification \citep{Guo2025PersistentBackdoor}, and replay can repeatedly rehearse poisons \citep{LopezPaz2017GEM,Chaudhry2019AGEM}; (ii) \textbf{backdoors (Trojans)} persisting across tasks, including controllable backdoors effective against regularization- and replay-based learners \citep{Gao2025RedAlarmNN,Gao2024RedAlarmSSRN}, and \emph{Persistent Backdoor Attacks}: \emph{Blind Task Backdoor} (BTB, per-task) and \emph{Latent Task Backdoor} (LTB, single-task ``sleeper'' activated when a future target class appears), achieving high attack success with minimal clean-accuracy drop \citep{Guo2025PersistentBackdoor}; (iii) \textbf{test-time evasion} where standard adversarial examples remain effective, motivating \emph{Continual Adversarial Defense (CAD)} and \emph{Retrospective Adversarial Replay (RAR)} \citep{Wang2023CAD,Kumari2022RAR}; and (iv) \textbf{distribution/scheduling attacks} that adversarially order tasks to exacerbate interference and forgetting, exploiting order sensitivity and the difficulty of the class-incremental regime \citep{vanDeVen2022ThreeTypes,DeLange2022SurveyTPAMI,Cossu2022IsClassIncrementalEnough}. These threats imply that robust CL must handle sequentially amplified poisoning, long-lived backdoors, non-stationary adversaries, and adversarial curricula.

We systematize these threats along three axes: \textbf{Attack Budget (AB)}, \textbf{Attack Visibility (AV)}, and \textbf{Attack Impact (AI)}. \emph{AB}: attacker control over the stream or training pipeline (poisoned fraction, task access, perturbation magnitude); sequential training can amplify small budgets, e.g., targeted erasure \citep{TargetedPoisoningCL2022}, $\sim$4\% poisoning in backdoors \citep{Guo2025PersistentBackdoor}, or norm-bounded task-$t$ perturbations in \textsc{BrainWash} \citep{Abbasi2024BrainWash}. Temporal access matters: \emph{LTB} needs single-task access, whereas \emph{BTB} assumes per-task intervention \citep{Guo2025PersistentBackdoor}; controllable backdoors in class-incremental CL succeed against both regularization- and replay-based learners \citep{Gao2025RedAlarmNN,Gao2024RedAlarmSSRN}. Replay revisitation can exacerbate small-budget poisoning \citep{LopezPaz2017GEM,Chaudhry2019AGEM}. \emph{AV}: stealth, often preserving current-task accuracy or using rare triggers; \textsc{BrainWash} contrasts ``reckless'' vs.\ ``cautious'' trade-offs \citep{Abbasi2024BrainWash}; persistent backdoors aim for high clean accuracy with long-term success \citep{Guo2025PersistentBackdoor,Gao2025RedAlarmNN}. \emph{AI}: damage and persistence, backdoors via attack success across subsequent tasks \citep{Guo2025PersistentBackdoor,Gao2025RedAlarmNN,Gao2024RedAlarmSSRN}; poisoning-induced forgetting via negative BWT and ACC drops \citep{DiazRodriguez2018MetricsCL}, with certain regularizers being notably vulnerable \citep{Kirkpatrick2017EWC,Huszar2018EWCNote}. Despite their centrality, AB/AV/AI are rarely reported jointly or standardized, motivating unified evaluation.
\section{Methodology}
\label{sec:method}

We study a \emph{sampler\mbox{-}level composition attack} for continual learning (CL) with experience replay. The adversary controls only the replay \emph{index set} and seeks to maximize forgetting \emph{(measured as the decline in past-task accuracy)}. The attack has two stages: (1) \textbf{Preference}: use lightweight utility signals (e.g., loss, confidence, or other logged scores) to \emph{tilt} the nominal class histogram ($p_0$), producing a harm-biased target mix; and (2) \textbf{Projection}: map this mix into the auditable \emph{stealth ball} ($\mathbb{B}_{\mathrm{Div}}(p_0,\delta)$) with $\mathrm{Div} \in \{\mathrm{TV}, \mathrm{KL}\}$. Projection guarantees each replay batch remains within divergence ($\delta$) of ($p_0$), keeping batch-level telemetry (the quantities visible to audits) plausibly benign. Unlike harm-agnostic baselines (e.g., \textbf{PO} (Preference-Only)) or fairness/coverage quota schemes (e.g., \textbf{PrO} (Projection-Only)), we \emph{explicitly} insert the harm-based preference \emph{before} projection, optimizing impact while staying within the same auditable constraints \citep{stealthy}. 

Crucially, the ``tilt-then-project'' step is \emph{computationally lightweight}: for \textbf{TV}, the exact optimizer is a greedy two-sided water-filling procedure (Algorithm~\ref{alg:side_by_side_projections} (ProjectTV)); for \textbf{KL}, it reduces to a single-parameter exponential tilt with a monotone 1-D root search (Algorithm~\ref{alg:side_by_side_projections}(ProjectKL)). If utilities are uninformative (e.g., $u$ is constant) or if the active budget is $\delta'=0$, then projection returns $p^\star=p_0$ and the attack has no effect.

\subsection{Preliminaries \& Notation}
\label{sec:notation}

We consider a continual learner with a labeled replay buffer \textbf{AUX} $=\mathcal A=\{(x_i,y_i)\}_{i=1}^{n_{\text{aux}}}$. In class-incremental CL, the set of seen classes grows over time; we let $C_t$ denote the number of classes observed up to step $t$, interpret labels as $y_i\in[C_t]$, and represent all class histograms (e.g., $p_0$, $p^\star$, $\bar p_t$) as elements of $\Delta^{C_t}$. When new classes appear, we conceptually append new coordinates and recompute $p_0$ from the current buffer histogram; the divergence audit $\mathrm{Div}(\bar p_t\|p_0)$ is then evaluated in this expanded space. While our attack applies to evolving buffers (e.g., sliding windows), we assume $\mathcal{A}$ is a fixed reservoir for notation simplicity. The nominal (audited) class histogram is $p_0 \in \Delta^{C_t}$, where $\Delta^{C_t}=\{p\in\mathbb{R}_+^{C_t}:\sum_c p_c=1\}$ is the probability simplex; we assume $p_{0,c}>0$ for all $c$ (or use standard smoothing) so that $\mathrm{KL}(p\|p_0)$ is finite whenever $p_c>0$. For brevity, when the step $t$ is clear we sometimes write $C$ as shorthand for the current $C_t$.

At each training step $t$, the attacker must select a replay batch of a fixed \emph{mass} (size) $m = \lfloor f\,n_{\text{aux}}\rfloor$, determined by a public keep fraction $f$. This selection is represented by an index set $I_t$, which corresponds to integer per-class quotas $q_t \in \mathbb{N}_0^C$ (with $\mathbb{N}_0$ for non-negative integers) where $\sum_c (q_t)_c=m$, and realizes the per-batch histogram $\bar p_t := q_t/m$. Let $a_c$ be the available samples per class; quotas are clipped ($q_{t,c} \le a_c$) and redistributed to preserve $\sum_c q_{t,c}=m$.

\paragraph{Harm utilities and how they are obtained.}
The attacker’s objective is to maximize harm, quantified by per-class utilities $u=(u_1,\dots,u_C)\in\mathbb{R}^C$. Optional per-sample utilities $\tilde u_i$ (e.g., loss/confidence) are logged as \emph{scalar metadata} alongside AUX items, and class utilities are maintained as EMAs, \(u_c^{(t)}=\rho\,u_c^{(t-1)}+(1-\rho)\,\mathrm{Agg}\{\tilde u_i: y_i=c\}\).
Importantly, the sampler need not query the \emph{current} model online: $\tilde u_i$ can be produced by the training service when an AUX sample is replayed (then stored), refreshed periodically (batch jobs), or replaced by simpler proxies (e.g., age, misclassification counters). If these signals are stale/noisy, the attack degrades gracefully; if $u$ is effectively constant, the optimizer returns $p^\star=p_0$.

We work with divergences $\mathrm{Div}\in\{\mathrm{TV},\mathrm{KL}\}$, defined as $\mathrm{TV}(p\|p_0)=\tfrac12\|p-p_0\|_1$ and $\mathrm{KL}(p\|p_0)=\sum_c p_c\log\!\big(\frac{p_c}{p_{0,c}}\big)$. A \emph{stealth radius} $\delta>0$ defines the \emph{stealth ball} $\mathbb{B}_{\mathrm{Div}}(p_0,\delta) := \{ p\in\Delta^C : \mathrm{Div}(p\|p_0)\le \delta \}$. For systems with windowed auditing (length $W$), the window-average histogram is $\hat p_{t-L+1:t}:=\frac1L\sum_{s=t-L+1}^{t}\bar p_s$. A ring buffer $\mathcal{R}$ tracks past divergences, and the attacker computes an \emph{online tightened budget} $\delta_t \le \delta$; we use $\delta'$ to denote the active budget (either $\delta$ or $\delta_t$) at step $t$. Throughout, we treat $p\in\Delta^C$ as a probability vector (sum $=1$); the keep fraction $f$ appears only via the batch mass $m=\lfloor f\,n_{\text{aux}}\rfloor$ when realizing $p^\star$ as integer quotas, and all divergences are evaluated on normalized histograms (e.g., $\bar p_t=q_t/m$). Finally, class-level preference strength is denoted by $\eta>0$ and within-class (intra-class) selection by a temperature $\eta_s>0$.

\subsection{Threat Model \& Attack Surface}
\label{sec:threat}
We consider a \textbf{grey-box} insider who controls only the replay sampler and aims to maximize forgetting under fixed replay mass $f$ and an audited stealth radius $\delta$; Table~\ref{tab:threat-model} summarizes capabilities and limitations.

We assume the auditor's nominal baseline $p_0$ is computed as a deterministic function of sampler-visible telemetry (e.g., the replay-buffer label histogram at sampling time, or a moving average thereof), so it can be reproduced by the sampler. If the attacker instead observes only a lagged/noisy estimate $\hat p_0$, baseline mismatch effectively reduces the usable visibility margin; Appendix~\ref{app:p0-sensitivity} (Table~\ref{tab:p0-sensitivity}) quantifies the resulting impact--stealth trade-off by varying the lag between $\hat p_0$ and $p_0$.%

\noindent\textbf{Scope.} Our threat model assumes \emph{replay-based} CL with a stored buffer and a distinct replay sampler that selects indices $I_t$. Accordingly, Amnesia does \emph{not} directly apply to \emph{rehearsal-free} CL methods (e.g., prompt-based approaches such as OVOR \cite{ovor} or CODA-Prompt \cite{coda}) because they do not maintain a replay buffer or replay-index selection surface. Conversely, the attack surface is orthogonal to whether the learner fine-tunes the full backbone or updates only a parameter-efficient module (e.g., head/adapters/LoRA/prompts): whenever replay is used, changing the sampled replay indices changes the training signal seen by whatever parameters are trainable. Extensions to other CL pipelines that sample ``past information'' (e.g., prototype selection or class-conditional generative replay) are conceptually related but outside our experimental scope. More broadly, this suggests potential relevance beyond replay-based CL: future non-replay settings, including continual learning for language models, may expose analogous low-privilege attack surfaces wherever a mechanism selects or composes past information (e.g., retrieval/memory selection, prompt/example selection, or routing), even without an explicit replay buffer; we leave direct study of such settings to future work.%

% Place this macro once (before the table or in your preamble)
\newcommand{\itemline}{\par\noindent\rule{0.9\linewidth}{0.3pt}\vspace{0.15em}}

\begin{table}[h!]
\centering
\small
\caption{Threat model: insider control of the replay sampler with constrained budgets and audited visibility.}
\label{tab:threat-model}
\begin{tabular}{p{0.40\linewidth} | p{0.50\linewidth}} 
\rowcolor{babyblue}
\textbf{Capabilities} & \textbf{Attacker Limitations} \\
\hline
Direct control of the replay sampler: set per-class quotas/probabilities; realize index sets \(I_t\) with total mass \(m\). \itemline

Read-only visibility of AUX-buffer labels and \emph{stored} utility scores (\(\tilde u_i\), class-wise \(u_c\)); signals may be lagged EMAs or periodically refreshed logs. \itemline

Temporal planning: schedule compositions and precommit plans for the next \(N\) batches.
&
No access to model internals (weights, gradients, optimizer). \itemline

No read or write access to the live task stream (pixels, labels) and no requirement to query real-time per-index losses from the trainer (utilities are treated as logged metadata). \itemline

Audited controls: fixed keep fraction \(f\) and stealth radius \(\delta\); TV/KL checks on label histograms vs. baseline \(p_0\). \itemline

Active monitoring: auditors track per-batch/window histograms/divergences and cross-check sampler code/config and selection logs. 
\\
\hline
\end{tabular}
\end{table}

This attack surface is realistic in modern MLOps pipelines. Integration points include sampler plugins that intercept batch requests (e.g., \texttt{get\_replay\_batch()}), configuration toggles for ``balanced replay'' that route to a quota module, and data-ops jobs that precompute a selection plan for future windows. Legitimate controls such as \emph{class-balancing}, \emph{prioritized replay}, and \emph{curriculum sampling} are commonly exposed via standard \textbf{PyTorch} sampler/data-loader hooks and the same utility telemetry, making this insider surface viable. The attacker can either persistently write malicious index sets or have the sampler execute a precomputed plan.

\subsection{Preference--Projected Replay Attack}
\label{sec:algorithm}

This subsection details the sampler-side routine as shown in Fig.~\ref{fig:overview}. It runs once per training step, manipulates only replay \emph{indices}, and leaves model weights/gradients/loss unchanged. At each step we solve the class-level program
\begin{equation}
\label{eq:core-obj-embedded}
\max_{p\in\Delta^C}\ u^\top p \quad \text{s.t.}\quad \mathrm{Div}(p\|p_0)\le \delta',\ \ \ \mathrm{Div}\in\{\mathrm{TV},\mathrm{KL}\},
\end{equation}
where $u$ are harm utilities and $\delta'$ is the active stealth budget (either the static $\delta$ or the dynamic windowed budget $\delta_t$).

\begin{algorithm}[ht]
\caption{Amnesia Replay Attack}
\label{alg:amnesia-replay}
\small
\begin{algorithmic}[1]
\Require AUX buffer $\mathcal{A}=\{(x_i,y_i,\tilde u_i)\}_{i=1}^{n_{\text{aux}}}$; nominal histogram $p_0\in\Delta^C$; keep fraction $f$; stealth radius $\delta$; window $W$; divergence $\mathrm{Div}\in\{\mathrm{KL},\mathrm{TV}\}$; class utilities $u_{1:C}$; sample temperature $\eta_s>0$; ring buffer $\mathcal{R}$ of past divergences (size $\le W-1$)
\Ensure Replay indices $I$ with exact mass $m$ and enforced per-batch/window stealth
\State $m \gets \lfloor f \cdot n_{\text{aux}} \rfloor$
\If{$W>1$}
  \State $\displaystyle \delta' \gets \max\!\Big\{0,\ \min_{0\le L\le \min(W-1,|\mathcal R|)} \Big((L+1)\delta - \sum_{\ell=|\mathcal R|-L+1}^{|\mathcal R|} \mathcal R[\ell]\Big)\Big\}$ \Comment{empty sum $=0$ for $L=0$}
\Else
  \State $\delta' \gets \delta$
\EndIf
\If{$\mathrm{Div}=\mathrm{KL}$}
  \State $p^\star \gets \textsc{ProjectKL}(p_0,u,\delta')$ \Comment{{Algorithm~\ref{alg:side_by_side_projections} (ProjectKL)}}
\Else \Comment{$\mathrm{Div}=\mathrm{TV}$}
  \State $p^\star \gets \textsc{ProjectTV}(p_0,u,\delta')$ \Comment{ {Algorithm~\ref{alg:side_by_side_projections} (ProjectTV)}}
\EndIf
\State $q \gets \textsc{RoundToSum}(m\,p^\star)$
\State $q \gets \textsc{ClipToAvailability}(q, \mathcal{A})$
\State $q \gets \textsc{AuditAndFixQuotas}(q, p_0, \delta', \mathrm{Div})$ \Comment{unit transfers until $\mathrm{Div}(q/m\|p_0)\le\delta'$}
\State $\bar p \gets q/m$
\State $I \gets \emptyset$
\For{each class $c$}
  \State add $q[c]$ indices of class $c$ by top-$\tilde u_i$ or by sampling $\propto \exp(\eta_s \tilde u_i)$ without replacement
\EndFor
\State push $\mathrm{Div}(\bar p\|p_0)$ to $\mathcal{R}$
\If{$|\mathcal{R}|>W-1$} \State pop oldest \EndIf
\State \Return $I$
\end{algorithmic}
\end{algorithm}

\paragraph{Step A: Preference (tilt).}
Encode harm preference using per-class utilities \(u_c\) and a tilt strength \(\eta>0\). Define the (unconstrained) tilted mix \(\tilde p_c \propto p_{0,c}\,e^{\eta u_c}\). For \textbf{TV} divergence, only the \emph{ordering} of \(u_c\) matters in the next step; the \textbf{KL} solution is determined exactly by the projector (Step~B). When $u$ is constant, tilting has no effect and $\tilde p=p_0$.

\paragraph{Step B: Projection to the stealth ball (exact optimizers).}
We find the optimal proportions $p^\star$ that solve Eq.~\ref{eq:core-obj-embedded}, attaining the maximum harm while satisfying $\mathrm{Div}(p^\star\|p_0)\le\delta'$. The method differs for KL and TV:
\begin{itemize}[leftmargin=1.2em,itemsep=0pt,topsep=0pt]
  \item \textbf{KL ball (single-tilt search).} By Lagrangian optimality, the optimizer has exponential form
  \begin{equation}
  \label{eq:kl-opt}
    p^\star(c)\ =\ \dfrac{p_0(c)\,\exp(\alpha\,u_c)}{\sum_j p_0(j)\,\exp(\alpha\,u_j)},
  \end{equation}
  where a scalar $\alpha \ge 0$ is chosen so that $\mathrm{KL}(p^\star\|p_0)=\delta'$ whenever the constraint is active; $\mathrm{KL}(p_\alpha\|p_0)$ is monotone in $\alpha$ when $u$ is non-constant, so a safe bisection finds $\alpha$  {(Algorithm~\ref{alg:side_by_side_projections} (ProjectKL))}. Each evaluation is \(O(C)\); overall cost is \(O(C\cdot\text{iters})\).
  \item \textbf{TV ball (two-sided water-filling).} Sort classes by $u_c$ and greedily move probability mass from the \emph{lowest}-utility classes to the \emph{highest}-utility classes until the total $\ell_1$ budget $\sum_c |p_c-p_{0,c}|=2\delta'$ is exhausted, respecting simplex bounds {(Algorithm~\ref{alg:side_by_side_projections} (ProjectTV))}. This is an exact solution and is inherently ``greedy constraint'' in nature: it spends the limited TV budget only on transfers that maximally increase $u^\top p$.
\end{itemize}

\textbf{Both methods return the exact optimizer $p^\star$ of Eq.~\ref{eq:core-obj-embedded}.}

\begin{algorithm}[h]
\caption{Projection Solvers for KL and TV Balls}
\label{alg:side_by_side_projections}
\footnotesize

% --- Left Column: ProjectKL ---
\begin{minipage}[t]{0.49\textwidth}
\vspace{0pt}
\textbf{{ProjectKL (left panel):}} \textsc{ProjectKL}$(p_0,u,\delta')$

\hrule
\begin{algorithmic}[1]
    \State \textbf{function} $\textsc{Tilt}(\alpha)$:
    \State \quad $p_\alpha \gets p_{0}\exp(\alpha u)$; \textbf{return} $p_\alpha/\sum p_\alpha$
    
    \State \textbf{if} $\max u = \min u$ \textbf{then} \Return $p_0$ \textbf{end if}
    \State $\alpha_{\text{lo}}\gets 0$;\, $\alpha_{\text{hi}}\gets 1$
    
    \While{$\mathrm{KL}(\textsc{Tilt}(\alpha_{\text{hi}})\|p_0)<\delta'$}
        \State $\alpha_{\text{hi}}\gets 2\,\alpha_{\text{hi}}$
    \EndWhile

    \While{$\alpha_{\text{hi}}-\alpha_{\text{lo}}>\varepsilon_\alpha$}
        \State $\alpha\gets(\alpha_{\text{lo}}+\alpha_{\text{hi}})/2$;\, $p_\alpha\gets\textsc{Tilt}(\alpha)$
        \If{$\mathrm{KL}(p_\alpha\|p_0)<\delta'$}
            \State $\alpha_{\text{lo}}\gets\alpha$
        \Else
            \State $\alpha_{\text{hi}}\gets\alpha$
        \EndIf
    \EndWhile
    \State \Return $p^\star\gets\textsc{Tilt}(\alpha_{\text{hi}})$
\end{algorithmic}
\end{minipage}%
\hfill
% --- Right Column: ProjectTV ---
\begin{minipage}[t]{0.49\textwidth}
\vspace{0pt}
\textbf{{ProjectTV (right panel):}} \textsc{ProjectTV}$(p_0,u,\delta')$
\hrule
\begin{algorithmic}[1]
    \State $p\gets p_0$;\, $b\gets 2\delta'$
    \State Sort indices: $u_{(1)}\le\cdots\le u_{(C)}$
    \State $\ell\gets 1$ (donor), $r\gets C$ (receiver)
    
    \While{$b>0$ and $\ell<r$}
        \State $\varepsilon\gets \min\{\,p_{(\ell)},\,1-p_{(r)},\,b/2\,\}$
        \State $p_{(\ell)}\gets p_{(\ell)}-\varepsilon$
        \State $p_{(r)}\gets p_{(r)}+\varepsilon$
        \State $b\gets b-2\varepsilon$
        
        \If{$p_{(\ell)}=0$} $\ell\gets \ell+1$ \EndIf 
        \If{$p_{(r)}=1$} $r\gets r-1$ \EndIf
    \EndWhile
    \State \Return $p^\star\gets p$
    
    % Phantom lines to align exact bottom
    \State \phantom{Line 14}
    \State \phantom{Line 15}
\end{algorithmic}
\end{minipage}

\end{algorithm}

{\textbf{Step C: Budgeted quotas and within-class selection.}}
\par\noindent

Convert \(p^\star\) to practical integer quotas \(q \in \mathbb{N}_0^{C_t}\) (Alg.~\ref{alg:amnesia-replay}, Lines~9--11): (i) apply \emph{largest-remainder rounding} (Hamilton method~\citep{JansonLinusson2012}): set $q_c \leftarrow \lfloor m\,p^\star_c\rfloor$, then allocate the remaining $m-\sum_c q_c$ units to classes with the largest fractional parts of $m\,p^\star_c$ until $\sum_c q_c=m$; (ii) \emph{clip to availability} ($q_c \le a_c$) and redistribute any deficit while preserving the total mass $m$; (iii) run \emph{audit-and-fix} to ensure $\mathrm{Div}(q/m\|p_0)\le \delta'$.

\par\noindent
Rounding introduces only tiny, explicitly bounded distortions: \(\|\bar p-p^\star\|_\infty \le 1/m\) and \(\|\bar p-p^\star\|_1 \le C_t/m\) (i.e., the bound scales with the current number of seen classes).%

For \textbf{TV}, each single-unit transfer from any class with $\bar p_i>p_{0,i}$ to any class with $\bar p_j<p_{0,j}$ decreases TV by exactly $1/m$, so feasibility is reached in finitely many swaps (at most \(2\delta' m\) swaps in the worst case). For \textbf{KL}, we perform discrete ``steepest-decrease'' swaps (from the largest log-ratio $\log(\bar p_c/p_{0,c})$ to the smallest), which strictly decreases KL each step until $\mathrm{KL}(\bar p\|p_0)\le\delta'$.
Within each class, select indices by \emph{top-\(\tilde u_i\)} or \emph{probabilistically} with \(\Pr(i\!\mid\!y_i=c)\propto e^{\eta_s \tilde u_i}\) (no replacement).

\paragraph{Step D: Windowed visibility (online scheduler).}
With windowed auditing (length \(W\)), we compute a tightened \(\delta_t\) from the ring buffer $\mathcal R$ (Alg.~\ref{alg:amnesia-replay}, Lines~2--6) and enforce $\mathrm{Div}(\bar p_t\|p_0)\le \delta'_t$ at each step. Intuitively, the scheduler spends only the \emph{residual} budget left after accounting for the last $W{-}1$ steps. In ideal arithmetic, this implies a deterministic sliding-window guarantee:

\smallskip
\noindent\textbf{Proposition (Residual-budget window compliance).}
Let \(\delta'_t\) be chosen as in Alg.~\ref{alg:amnesia-replay} and enforce \(\mathrm{Div}(\bar p_t\|p_0)\le \delta'_t\) at every step. If \(\mathrm{Div}\) is convex in its first argument, then for all \(t\) and all \(1\le L\le W\),
\begin{equation}
\label{eq:window-stealth}
\mathrm{Div}\!\big(\hat p_{t-L+1:t}\ \|\ p_0\big)\ \le\ \delta.
\end{equation}
\emph{Proof sketch.} The residual update ensures the partial sums over any trailing window of length \(L\le W\) never exceed \(L\delta\). By Jensen/convexity, \(\mathrm{Div}(\hat p\|p_0)\le \frac1L\sum_s \mathrm{Div}(\bar p_s\|p_0)\le \delta\).
\smallskip

In practice, discretization/availability can introduce rare numerical slack; we therefore also report empirical window-violation rates in §\ref{sec:visibility}.

\paragraph{Complexity.}
Scheduler update is \(O(W)\). Projection is \(O(C\cdot\text{iters})\) (KL) or \(O(C\log C)\) (TV, for sorting). Quotas/auditing are \(O(C)\) plus a small number of unit transfers.

\subsection{Visibility and Efficiency Guarantees}
\label{sec:visibility}

Our method provides three key guarantees by construction, ensuring stealth and budget compliance; we also state an explicit dominance property.

\paragraph{Guarantee 1: Per-batch Stealth (hard constraint) and tail reporting (policy).}
By Step~C's audit-and-fix, the realized histogram satisfies \(\mathrm{Div}(\bar p_t\|p_0)\le \delta'\) (hence \(\le\delta\)). For reporting, we define the \emph{normalized} per-batch divergence and its 95th-percentile summary:
\begin{equation}
\label{eq:rbatch95-visibility}
r_t := \frac{\mathrm{Div}(\bar p_t\|p_0)}{\delta}\in[0,1],
\qquad
r_{\text{batch@95}} := \operatorname{Quantile}_{0.95}\{r_t\}.
\end{equation}
We use \(r_{\text{batch@95}}\le 0.05\) as a \emph{calibrated reporting threshold}: it means 95\% of batches spend at most 5\% of the audit radius, leaving headroom for rounding/availability noise. This is a policy choice for distinguishing ``highly stealthy'' regimes (not a requirement implied by the definition of the audit ball). Appendix reports the clean-run calibration motivating the 0.05 band.

\paragraph{Guarantee 2: Windowed Stealth (deterministic scheduler) and empirical slack.}
Under Step~D’s residual-budget scheduler and convexity of \(\mathrm{Div}\), Eq.~\ref{eq:window-stealth} holds for any window \(L \le W\).
We additionally report an empirical \textbf{window violation rate} to capture any residual exceedances due to discretization/measurement noise:
\begin{equation}
\label{eq:rwin-visibility}
r_{\text{win}} \;=\; \frac{1}{T}\sum_{t=1}^{T} \mathbb{1}\!\left\{\mathrm{Div}\!\big(\hat p_{t-W+1:t} \,\|\, p_0\big) > \delta \right\}.
\end{equation}
We treat \(r_{\text{win}}\le 0.05\) as an \emph{operational acceptance band} (clean-calibrated) rather than a theoretical necessity.

\paragraph{Guarantee 3: Budget Conservation.}
The quota generation process (Step~C) is strictly mass-preserving, ensuring \(\sum_{c=1}^C q_{t,c}=m\). This yields a realized keep fraction \(\hat f_t=m/n_{\text{aux}}\) that tightly tracks the target \(f\), with \(|\hat f_t-f|\le 1/n_{\text{aux}}\). We verify this using the \textbf{95th-percentile fraction error}:
\begin{equation}
\label{eq:e95-visibility}
e_{95} \;=\; \operatorname{Quantile}_{0.95}\!\big(|\hat f_t - f|\big).
\end{equation}
Because mass is exact, we have the deterministic bound \(e_{95}\le 1/n_{\text{aux}}\) (e.g., \(\le 0.002\) for \(n_{\text{aux}}{=}500\)); we use \(e_{95}\!\le\!0.02\) only as a loose reporting range.

\paragraph{Dominance property (optimality under the audited constraint).}
Fix any non-constant utility vector $u$, nominal histogram $p_0$, divergence type $\mathrm{Div}\in\{\mathrm{KL},\mathrm{TV}\}$, and active budget $\delta'$. Let $p^\star$ be the solution of Eq.~\ref{eq:core-obj-embedded}. Then for any other feasible sampler-level class mix $p\in\Delta^C$ with $\mathrm{Div}(p\|p_0)\le \delta'$, we have \(u^\top p \le u^\top p^\star\), with strict inequality unless $p$ is also optimal (and in particular if $\delta'>0$ and $u$ is non-constant, the optimizer is unique for KL). Thus, given a specified harm surrogate $u$ and the same auditable constraint set, Amnesia’s projection step is not a heuristic: it is the exact best-response class composition.

\section{Experimental Setup}

\subsection{Datasets}
\label{sec:datasets}
We evaluate on four standard CL benchmarks of increasing difficulty: \textit{Split CIFAR-10}~\citep{cifar} (10 classes, 32\(\times\)32 RGB; 5 tasks \(\times\) 2 classes, easy), \textit{CORe50}~\citep{core50} (50 classes; class-incremental; 10 tasks \(\times\) 5 classes, harder), \textit{Split CIFAR-100}~\citep{cifar} (100 classes, 32\(\times\)32 RGB; 10 tasks \(\times\) 10 classes), and \textit{Tiny-ImageNet}~\citep{tinyimagenet} (ImageNet subset, 200 classes, 64\(\times\)64; we use 100 classes as 5 tasks \(\times\) 20 classes and reserve the remaining 100 for auxiliary out-of-stream ablations, more challenging).

\subsection{Replay-based Continual Learning Methods}
\label{sec:replay-methods}
Replay CL maintains a small memory \(\mathcal{M}\) and, at step \(t\), optimizes on the union of the current mini-batch \(B_t\) and a memory mini-batch \(M_t\!\subset\!\mathcal{M}\) (typically via reservoir sampling). We attack the canonical \textbf{Experience Replay (ER)}~\citep{er} and three prominent extensions: \textbf{ER-ACE}~\citep{erace} (asymmetric cross-entropy restricting logits on new data to current-task classes to reduce representation drift), \textbf{DER++}~\citep{der} (ER with knowledge distillation by matching current logits to stored past logits for memory samples), and \textbf{SCR}~\citep{scr} (supervised contrastive loss on mixed new+replay batches to learn a more unified representation).

\subsection{Training Protocol}
\textbf{Protocol.} Unless stated otherwise, we use \textbf{ResNet-18}~\citep{resnet} with \textbf{SGD} (lr \(=0.03\)), mini-batch size \(64\), and buffer size \(500\). Images follow official splits and are normalized to \([0,1]\); CORe50 uses the \emph{New Classes (NC)} scenario. Each task is trained for \textbf{10 epochs} with a \textbf{fixed} task order; results are averaged over \textbf{5} seeds. Default replay/audit settings are keep fraction \(f{=}0.1\), stealth radius \(\delta{=}0.1\), and audit window \(W{=}10\).

\paragraph{Clean vs.\ attacked hyperparameters.}
For every method/dataset, learner-side hyperparameters are \emph{identical} between clean and attacked runs (optimizer, learning rate, epochs, augmentations, buffer size, and method-specific settings), isolating the effect of \emph{sampler-level replay composition} (index selection) from retuning. Since CL performance can be hyperparameter-sensitive, we focus on \emph{relative} degradation under a fixed, standard protocol rather than globally optimal tuning.

\paragraph{Sampler telemetry used to construct utilities.}
Amnesia consumes lightweight, \emph{sampler-readable} utility logs produced by the training job and read asynchronously by the sampler. Unless stated otherwise, \(\tilde u_i\) is the cross-entropy loss of a replayed example when it is last observed during training, and \(u_c\) is an EMA over an aggregation (mean) of \(\tilde u_i\) for samples with label \(c\); thus the sampler requires no real-time per-index model queries and only reads the latest available utility snapshot.

\paragraph{Nominal histogram for auditing.}
Unless stated otherwise, \(p_0\) is the replay-buffer class histogram at sampling time (with standard smoothing to ensure \(p_{0,c}>0\) for KL), and all audit metrics (\(\mathrm{TV}\) or \(\mathrm{KL}\)) are computed between the realized replay histogram and this \(p_0\).

\subsection{Evaluation Criteria}
Beyond the stealth/budget metrics in \S\ref{sec:visibility}, we report final accuracy and backward transfer (BWT). Let \(R_{i,j}\) denote performance on task \(j\) after training through task \(i\), and let \(T\) be the number of tasks:
\begin{equation}
\mathrm{BWT}=\frac{1}{T-1}\sum_{j=1}^{T-1}\!\big(R_{T,j}-R_{j,j}\big),
\qquad
\mathrm{ACC}=\frac{1}{T}\sum_{j=1}^{T} R_{T,j}.
\label{eq:bwt_acc}
\end{equation}
By convention, negative BWT indicates forgetting; we therefore report \(\mathbf{-\mathrm{BWT}}\) so that larger values correspond to more forgetting (stronger attack impact).

\section{Results}
\label{sec:Results}

\begin{table*}[t]
\caption{\textbf{Amnesia attack results (Impact / Stealth / Budget).}
Base (no attack) vs.\ \textbf{KL} and \textbf{TV} for all models/datasets.
\textbf{Impact:} ACC$\downarrow$, $-\mathrm{BWT}\uparrow$;\;
\textbf{Stealth:} $r_{\text{batch@95}}\downarrow$, $r_{\text{win}}\downarrow$;\;
\textbf{Budget:} $e_{95}\downarrow$.
\emph{Mean{\scriptsize$\pm$}std over 5 seeds. Stealth/Budget are $\times10^{-2}$ (e.g., $5\!=\!0.05$).}
\emph{Red cells exceed conservative, clean-calibrated reporting bands
($r_{\text{batch@95}}\!\le\!0.05$, $r_{\text{win}}\!\le\!0.05$, $e_{95}\!\le\!0.02$); these are reporting policies
(tail headroom / violation-rate targets), not a change to the enforced per-batch audit constraint
$\mathrm{Div}(\bar p\|p_0)\le \delta'$ (guaranteed by construction).}}
\label{tab:impact-stealth-budget}
\centering
{\small
\setlength{\tabcolsep}{2.5pt}
\renewcommand{\arraystretch}{1.15}
\resizebox{\textwidth}{!}{%
\begin{tabular}{l l c | c c c | c c c}
\hline
\multirow{2}{*}[-1ex]{\textbf{Dataset}} &
\multirow{2}{*}[-1ex]{\textbf{Model}} &
\multirow{2}{*}[-2ex]{\makecell{\textbf{No Attack}\\[0.5ex]ACC$\downarrow$ (-BWT$\uparrow$)}} &
\multicolumn{3}{c|}{\tstrut\textbf{KL}\bstrut} &
\multicolumn{3}{c}{\tstrut\textbf{TV}\bstrut} \\
\cline{4-9}
& & &
\makecell{\textbf{Impact}\\ ACC$\downarrow$ \;(-BWT$\uparrow$)} &
\makecell{\textbf{Stealth}\\ $r_{\text{batch@95}}\downarrow$ \,($r_{\text{win}}\downarrow$)} &
\makecell{\textbf{Budget}\\ $e_{95}\downarrow$} &
\makecell{\textbf{Impact}\\ ACC$\downarrow$ \;(-BWT$\uparrow$)} &
\makecell{\textbf{Stealth}\\ $r_{\text{batch@95}}\downarrow$ \,($r_{\text{win}}\downarrow$)} &
\makecell{\textbf{Budget}\\ $e_{95}\downarrow$} \\
\hline
% --- (table body unchanged) ---
% (keep your full table rows exactly as-is)
\multirow{4}{*}{CIFAR-10}
  & ER         & 50.4{\scriptsize$\pm$0.6} (55.1{\scriptsize$\pm$1.3})
               & 29.3{\scriptsize$\pm$0.29} (82.7{\scriptsize$\pm$2.0})
               & 5.0{\scriptsize$\pm$0.4} (0.2{\scriptsize$\pm$0.1})
               & 1.0{\scriptsize$\pm$0.3}
               & 25.1{\scriptsize$\pm$0.25} (88.2{\scriptsize$\pm$2.3})
               & {5.0{\scriptsize$\pm$2.0} (6.0{\scriptsize$\pm$1.5})}
               & 1.5{\scriptsize$\pm$0.5} \\
  & \textbf{SCR} & 59.6{\scriptsize$\pm$0.7} (38.9{\scriptsize$\pm$0.4})
               & \textcolor{darkgreen}{\textbf{31.1{\scriptsize$\pm$0.8} (78.2{\scriptsize$\pm$2.1})}}
               & \textcolor{darkgreen}{\textbf{4.0{\scriptsize$\pm$0.3} (0.7{\scriptsize$\pm$0.2})}}
               & \textcolor{darkgreen}{\textbf{1.5{\scriptsize$\pm$0.4}}}
               & 28.9{\scriptsize$\pm$0.29} (83.4{\scriptsize$\pm$2.3})
               & 4.0{\scriptsize$\pm$0.3} (0.7{\scriptsize$\pm$0.2})
               & {1.8{\scriptsize$\pm$0.5}} \\
  & DER++      & 64.0{\scriptsize$\pm$0.7} (29.1{\scriptsize$\pm$0.3})
               & 33.1{\scriptsize$\pm$0.9} (69.1{\scriptsize$\pm$2.0})
               & 5.0{\scriptsize$\pm$0.5} (0.2{\scriptsize$\pm$0.1})
               & 0.9{\scriptsize$\pm$0.3}
               & 30.7{\scriptsize$\pm$0.90} (75.1{\scriptsize$\pm$2.2})
               & \textcolor{red}{10{\scriptsize$\pm$4.0} (1.0{\scriptsize$\pm$0.5})}
               & 2.0{\scriptsize$\pm$0.7} \\
  & ER-ACE     & 65.2{\scriptsize$\pm$0.8} (15.2{\scriptsize$\pm$0.15})
               & 34.2{\scriptsize$\pm$1.0} (56.9{\scriptsize$\pm$1.7})
               & 1.0{\scriptsize$\pm$0.2} (0.1{\scriptsize$\pm$0.1})
               & 1.0{\scriptsize$\pm$0.3}
               & 32.3{\scriptsize$\pm$0.95} (60.3{\scriptsize$\pm$1.8})
               & 4.0{\scriptsize$\pm$0.4} (3.0{\scriptsize$\pm$0.8})
               & 2.0{\scriptsize$\pm$0.7} \\
\hline

\multirow{4}{*}{CORe50}
  & \textbf{ER} & 55.4{\scriptsize$\pm$0.7} (52.5{\scriptsize$\pm$1.5})
               & 32.5{\scriptsize$\pm$0.8} (84.4{\scriptsize$\pm$2.1})
               & 0.8{\scriptsize$\pm$0.2} (0{\scriptsize$\pm$0.0})
               & 0.9{\scriptsize$\pm$0.3}
               & \textcolor{darkgreen}{\textbf{32.1{\scriptsize$\pm$0.96} (85.8{\scriptsize$\pm$2.2})}}
               & \textcolor{darkgreen}{\textbf{4.0{\scriptsize$\pm$0.3} (0{\scriptsize$\pm$0.0})}}
               & \textcolor{darkgreen}{\textbf{0.8{\scriptsize$\pm$0.3}}} \\
  & SCR        & 52.9{\scriptsize$\pm$0.8} (58.8{\scriptsize$\pm$1.6})
               & 28.5{\scriptsize$\pm$0.28} (87.2{\scriptsize$\pm$2.2})
               & 5.0{\scriptsize$\pm$0.4} (0{\scriptsize$\pm$0.0})
               & 0.9{\scriptsize$\pm$0.3}
               & 33.5{\scriptsize$\pm$1.00} (82.4{\scriptsize$\pm$2.1})
               & 3.0{\scriptsize$\pm$0.3} (5.0{\scriptsize$\pm$0.8})
               & 0.9{\scriptsize$\pm$0.3} \\
  & DER++      & 61.7{\scriptsize$\pm$0.8} (47.3{\scriptsize$\pm$0.47})
               & 30.1{\scriptsize$\pm$0.90} (86.8{\scriptsize$\pm$2.2})
               & \textcolor{red}{8.0{\scriptsize$\pm$3.0} (0.1{\scriptsize$\pm$0.5})}
               & 1.0{\scriptsize$\pm$0.6}
               & 32.2{\scriptsize$\pm$0.96} (84.53{\scriptsize$\pm$2.4})
               & \textcolor{red}{10{\scriptsize$\pm$4.0} (0.3{\scriptsize$\pm$1.0})}
               & 1.0{\scriptsize$\pm$0.8} \\
  & ER-ACE     & 68.7{\scriptsize$\pm$0.9} (24.5{\scriptsize$\pm$0.25})
               & 36.0{\scriptsize$\pm$1.0} (75.7{\scriptsize$\pm$2.0})
               & 1.4{\scriptsize$\pm$0.4} (0.1{\scriptsize$\pm$0.1})
               & 2.0{\scriptsize$\pm$0.8}
               & 34.6{\scriptsize$\pm$1.00} (74.7{\scriptsize$\pm$2.1})
               & {4.0{\scriptsize$\pm$1.5} (3.0{\scriptsize$\pm$2.0})}
               & 0.9{\scriptsize$\pm$0.6} \\
\hline

\multirow{4}{*}{CIFAR-100}
  & ER         & 35.2{\scriptsize$\pm$0.7} (57.4{\scriptsize$\pm$1.7})
               & 20.3{\scriptsize$\pm$0.20} (72.7{\scriptsize$\pm$2.2})
               & 4.0{\scriptsize$\pm$0.6} (2.0{\scriptsize$\pm$0.6})
               & 0.3{\scriptsize$\pm$0.2}
               & 21.8{\scriptsize$\pm$0.22} (73.1{\scriptsize$\pm$2.2})
               & 5.0{\scriptsize$\pm$0.7} (3.0{\scriptsize$\pm$0.6})
               & 1.0{\scriptsize$\pm$0.5} \\
  & SCR        & 33.5{\scriptsize$\pm$0.7} (52.3{\scriptsize$\pm$1.6})
               & 21.5{\scriptsize$\pm$0.22} (75.6{\scriptsize$\pm$2.3})
               & 4.0{\scriptsize$\pm$0.5} (3.0{\scriptsize$\pm$0.6})
               & 1.2{\scriptsize$\pm$0.7}
               & 19.2{\scriptsize$\pm$0.19} (81.4{\scriptsize$\pm$2.4})
               & \textcolor{red}{15{\scriptsize$\pm$4.0} (2.0{\scriptsize$\pm$1.0})}
               & 1.9{\scriptsize$\pm$0.9} \\
  & DER++      & 33.1{\scriptsize$\pm$0.7} (45.2{\scriptsize$\pm$0.45})
               & 18.7{\scriptsize$\pm$0.19} (63.6{\scriptsize$\pm$1.9})
               & \textcolor{red}{13{\scriptsize$\pm$5.0} (2.0{\scriptsize$\pm$1.0})}
               & 1.9{\scriptsize$\pm$0.9}
               & 20.1{\scriptsize$\pm$0.20} (60.5{\scriptsize$\pm$1.8})
               & \textcolor{red}{41{\scriptsize$\pm$5.0} (6{\scriptsize$\pm$5.0})}
               & \textcolor{red}{5.0{\scriptsize$\pm$3.0}} \\
  & \textbf{ER-ACE} & 44.2{\scriptsize$\pm$0.9} (21.3{\scriptsize$\pm$0.21})
               & 23.3{\scriptsize$\pm$0.23} (65.6{\scriptsize$\pm$2.0})
               & 3.7{\scriptsize$\pm$0.6} (0{\scriptsize$\pm$0.0})
               & 0.2{\scriptsize$\pm$0.2}
               & \textcolor{darkgreen}{\textbf{21.5{\scriptsize$\pm$0.22} (70.1{\scriptsize$\pm$2.1})}}
               & \textcolor{darkgreen}{\textbf{5.0{\scriptsize$\pm$0.9} (5.0{\scriptsize$\pm$0.9})}}
               & \textcolor{darkgreen}{\textbf{0.9{\scriptsize$\pm$0.8}}} \\
\hline

\multirow{4}{*}{TinyImageNet}
  & ER         & 17.0{\scriptsize$\pm$0.17} (21.3{\scriptsize$\pm$0.21})
               & 8.2{\scriptsize$\pm$0.08} (39.3{\scriptsize$\pm$0.39})
               & 4.0{\scriptsize$\pm$0.6} (3.0{\scriptsize$\pm$0.6})
               & 0.6{\scriptsize$\pm$0.5}
               & 6.5{\scriptsize$\pm$0.06} (45.9{\scriptsize$\pm$0.45})
               & \textcolor{red}{11{\scriptsize$\pm$4.0} (12{\scriptsize$\pm$5.0})}
               & 1.2{\scriptsize$\pm$0.8} \\
  & SCR        & 22.5{\scriptsize$\pm$0.22} (20.2{\scriptsize$\pm$0.20})
               & 11.4{\scriptsize$\pm$0.11} (33.7{\scriptsize$\pm$0.33})
               & \textcolor{red}{12{\scriptsize$\pm$4.0} (2.0{\scriptsize$\pm$1.0})}
               & 1.1{\scriptsize$\pm$0.8}
               & 9.9{\scriptsize$\pm$0.09} (42.8{\scriptsize$\pm$0.42})
               & 5.0{\scriptsize$\pm$0.8} (4.0{\scriptsize$\pm$0.8})
               & 2.0{\scriptsize$\pm$0.9} \\
  & DER++      & 21.8{\scriptsize$\pm$0.22} (19.5{\scriptsize$\pm$0.20})
               & 10.2{\scriptsize$\pm$0.10} (30.1{\scriptsize$\pm$0.30})
               & \textcolor{red}{11{\scriptsize$\pm$3.0} (2.0{\scriptsize$\pm$1.0})}
               & 1.2{\scriptsize$\pm$0.8}
               & 11.8{\scriptsize$\pm$0.12} (38.4{\scriptsize$\pm$0.38})
               & \textcolor{red}{34{\scriptsize$\pm$5.0} (36{\scriptsize$\pm$5.0})}
               & \textcolor{red}{5.1{\scriptsize$\pm$3.0}} \\
  & \textbf{ER-ACE} & 23.5{\scriptsize$\pm$0.24} (15.7{\scriptsize$\pm$0.16})
               & \textcolor{darkgreen}{\textbf{13.1{\scriptsize$\pm$0.13} (45.1{\scriptsize$\pm$0.45})}}
               & \textcolor{darkgreen}{\textbf{2.8{\scriptsize$\pm$0.6} (1.0{\scriptsize$\pm$0.6})}}
               & \textcolor{darkgreen}{\textbf{0.7{\scriptsize$\pm$0.5}}}
               & 10.5{\scriptsize$\pm$0.10} (52.4{\scriptsize$\pm$1.50})
               & 5.0{\scriptsize$\pm$0.7} (4.0{\scriptsize$\pm$0.8})
               & 1.9{\scriptsize$\pm$0.8} \\
\hline
\end{tabular}
} % end resizebox
} % end small block
\end{table*}

\paragraph{Main impact and audit adherence.}
Table~\ref{tab:impact-stealth-budget} summarizes \emph{impact} (ACC$\downarrow$, $-\mathrm{BWT}\uparrow$) and \emph{audit adherence} (Stealth, Budget). The sampler enforces the core per-batch constraint $\mathrm{Div}(\bar p\|p_0)\le \delta'$ by construction (via audit-and-fix); \textcolor{red}{red} denotes exceeding conservative reporting bands: (i) $r_{\text{batch@95}}\!\le\!0.05$ (95\% of steps spend at most 5\% of the audit radius; tail headroom) and (ii) $r_{\text{win}}\!\le\!0.05$ (an operational target for rare window exceedances under discretization/availability noise). Under these criteria, \textcolor{darkgreen}{green} highlights the best \emph{compliant} entries per setting.

Across datasets and replay methods, Amnesia induces substantial forgetting: ACC drops sharply and $-\mathrm{BWT}$ rises. Canonical ER is often (though not always) among the most vulnerable. ER-ACE, which mitigates representation drift, is typically the most robust \emph{among compliant runs} (e.g., Tiny-ImageNet/KL) while still suffering large degradation. DER++ is particularly sensitive in high-class regimes (CIFAR-100, Tiny-ImageNet), exhibiting both strong impact and frequent red-flagged stealth/budget issues. A plausible explanation is that DER++’s distillation loss sharpens the utility landscape and creates more extreme sampling pressures, which are harder to realize cleanly under integer quotas and limited per-class availability.

\paragraph{When and why do red-flagged audit metrics occur?}
Band exceedances are more common as the \emph{effective granularity} tightens, especially when $m/C \approx 1$ (few items per class per batch). CORe50 ($C{=}50$) is representative when $f{=}0.1$ (small $m$), and CIFAR-100 / Tiny-ImageNet are also tight because $C$ is large. In these regimes, discretization (rounding), clipping to availability, and sparse allocations (especially for TV) can create batch-to-batch spikes in normalized divergence and rare window exceedances. This matches \S\ref{sec:algorithm}: when $m/C$ is small, even a one-unit quota change yields a large change in $\bar p$, so realized histograms become inherently ``chunky,'' increasing the risk of tail spikes in $r_t$ and, for TV, occasional window exceedances under aggressive reallocation.

\paragraph{KL vs.\ TV trade-off and the ``greedy constraint'' question.}
A consistent KL--TV trade-off emerges: TV often yields slightly stronger impact but triggers more red-flagged stealth/budget entries. This follows from the projectors. KL’s exponential tilt, $p^\star(c)\propto p_0(c)\exp(\alpha u_c)$, preserves positive mass on all classes and discretizes more smoothly, producing fewer tail/window spikes. TV’s two-sided water-filling can push low-utility classes toward $0$, producing a sparser and more extreme $p^\star$ (closer to the unconstrained optimum) that is harder to realize under integerization and availability constraints.

This also motivates a simple, budget-aware baseline: \emph{the TV variant is itself greedy and budget-aware.} 
The TV projector is exactly two-sided water-filling, greedily transferring probability mass from the lowest-utility to the highest-utility classes until the TV budget is exhausted (Algorithm~\ref{alg:side_by_side_projections} (ProjectTV)).
For KL, the ``optimization'' is a single-parameter monotone search implementing the exponential-tilt solution (Algorithm~\ref{alg:side_by_side_projections} (ProjectKL)).
Thus, the key modeling choice is the harm surrogate $u$ (defined in \S\ref{sec:notation} and instantiated in \S\ref{sec:datasets}).

\subsection{Robustness to Backbone Choice}
\label{sec:backbone-robustness}

% \begin{table}[ht]
% \caption{\textbf{CIFAR-100 (ER-ACE, KL):} backbone sensitivity (mean{\scriptsize$\pm$}std over 5 seeds; Stealth/Budget $\times10^{-2}$).}
% \label{tab:cifar100_erace_transposed_splitrows_vit_convnext}
% \centering
% {\small
% \setlength{\tabcolsep}{5pt}
% \renewcommand{\arraystretch}{0.95}
% \begin{tabular}{l c c}
% \hline
% \rowcolor{babyblue}
% \textbf{Metric} & \textbf{VIT tiny} & \textbf{Convnext Base} \\
% \hline
% Baseline ACC & 48.1{\scriptsize$\pm$0.8} & 51.7{\scriptsize$\pm$0.9} \\
% Baseline Forget & 20.0{\scriptsize$\pm$0.20} & 19.1{\scriptsize$\pm$0.18} \\
% Attack ACC & 25.6{\scriptsize$\pm$0.25} & 27.9{\scriptsize$\pm$0.30} \\
% Attack Forget & 63.8{\scriptsize$\pm$1.8} & 61.9{\scriptsize$\pm$1.7} \\
% Stealthy $r_{\text{batch@95}}$ & 3.5{\scriptsize$\pm$0.6} & 3.3{\scriptsize$\pm$0.5} \\
% Stealthy $r_{\text{win}}$ & 0{\scriptsize$\pm$0.0} & 0{\scriptsize$\pm$0.0} \\
% Budget $e_{95}$ & 0.2{\scriptsize$\pm$0.2} & 0.2{\scriptsize$\pm$0.2} \\
% \hline
% \end{tabular}
% }
% \end{table}

Replacing ResNet-18 with ViT-Tiny and ConvNeXt-Base (CIFAR-100, ER-ACE, KL) preserves the qualitative conclusion (Table~\ref{tab:cifar100_erace_transposed_splitrows_vit_convnext}): Amnesia remains strongly effective (large ACC drops and severe increases in forgetting) while staying within the same audit policy (low $r_{\text{batch@95}}$, zero $r_{\text{win}}$, and tight $e_{95}$). Stronger backbones improve clean performance and show slightly higher robustness under attack (higher attacked ACC and slightly lower forgetting), suggesting that increased representational stability can reduce (but not eliminate) the leverage of divergence-constrained replay steering.

\begin{table}[ht]
    \centering
    % Table 1: Backbone Metrics
    \begin{minipage}{0.37\textwidth}
        \centering
        \caption{\textbf{CIFAR-100 (ER-ACE) under (KL attack)}. Values are mean{\scriptsize$\pm$}std over 5 seeds.
        Stealth and Budget are reported $\times 10^{-2}$ (e.g., $5$ denotes $0.05$).}
        \label{tab:cifar100_erace_transposed_splitrows_vit_convnext}
        \resizebox{\linewidth}{!}{%
        \begin{tabular}{l c c}
        \hline
        \rowcolor{babyblue}
        \textbf{Metric} & \textbf{VIT} & \textbf{Convnext} \\
        \hline
        Baseline ACC & 48.1{\scriptsize$\pm$0.8} & 51.7{\scriptsize$\pm$0.9} \\
        Baseline Forget & 20.0{\scriptsize$\pm$0.20} & 19.1{\scriptsize$\pm$0.18} \\
        Attack ACC & 25.6{\scriptsize$\pm$0.25} & 27.9{\scriptsize$\pm$0.30} \\
        Attack Forget & 63.8{\scriptsize$\pm$1.8} & 61.9{\scriptsize$\pm$1.7} \\
        Stealthy $r_{\text{batch@95}}$ & 3.5{\scriptsize$\pm$0.6} & 3.3{\scriptsize$\pm$0.5} \\
        Stealthy $r_{\text{win}}$ & 0{\scriptsize$\pm$0.0} & 0{\scriptsize$\pm$0.0} \\
        Budget $e_{95}$ & 0.2{\scriptsize$\pm$0.2} & 0.2{\scriptsize$\pm$0.2} \\
        \hline
        \end{tabular}}
    \end{minipage}
    \hfill
    % Table 2: Ablation Study
    \begin{minipage}{0.60\textwidth}
        \centering
        \caption{\textbf{Ablation of preference and projection on ER-ACE (Split CIFAR-10).}
        \checkmark/\xmark\ denote enabled/disabled. \textbf{aux\_trim} utility prioritization; \textbf{PrO} is the projection only \textbf{PO}: aux\_trim only.}
        \label{tab:scdt-results}
        \resizebox{\linewidth}{!}{%
        \begin{tabular}{l c c c r r r}
        \hline
        \rowcolor{babyblue}
        \textbf{Model} & \textbf{aux\_trim} & \textbf{PrO} & \textbf{Div.} &
        \textbf{ACC (-BWT)} &
        $\boldsymbol{r_{\text{batch@95}}(r_{\text{win}})}$ &
        \textbf{$e_{95}$} \\
        \hline

        \rowcolor{AlgBack!45}
        ER-ACE (Baseline) & \xmark & \xmark & N/A & 65.21 / 15.25 & N/A            & N/A \\
        PO        & \checkmark & \xmark & N/A & 68.55 / 12.39 & N/A         & N/A \\
        \rowcolor{AlgBack!45}
        PrO (KL)     & \xmark & \checkmark & KL  & 45.07 / 31.67 & 0.516 (0.224) & 0.013 \\
        PrO (TV)        & \xmark & \checkmark & TV  & 47.38 / 26.06 & 1.183 (0.222) & 0.012 \\
        \rowcolor{AlgBack!45}
        Amnesia (TV)     & \checkmark & \checkmark & TV  & 32.34 / 60.30 & 0.040 (0.085) & 0.020 \\
        \textcolor{darkgreen}{\textbf{Amnesia (KL)}} &
          \textcolor{darkgreen}{\textbf{\checkmark}} &
          \textcolor{darkgreen}{\textbf{\checkmark}} &
          \textcolor{darkgreen}{\textbf{KL}} &
          \textcolor{darkgreen}{\textbf{34.20 / 56.90}} &
          \textcolor{darkgreen}{\textbf{0.010 (0.000)}} &
          \textcolor{darkgreen}{\textbf{0.011}} \\
        \hline
        \end{tabular}}
    \end{minipage}
\end{table}

\paragraph{Ablation insights (ER\mbox{-}ACE, Split CIFAR\mbox{-}10).}
Table~\ref{tab:scdt-results} disentangles \textbf{Preference} (aux\_trim) and \textbf{Projection} (PrO).
\emph{Preference without Projection} (\textbf{PO}) can improve clean CL (higher ACC, lower $-\mathrm{BWT}$): PO resembles prioritized replay / hard-example mining and does not enforce an adversarial class-level reallocation under an audit constraint, so it can act as a benign optimization rather than an attack. \emph{Projection without Preference} (\textbf{PrO}) increases forgetting but is substantially more visible (large $r_{\text{batch@95}}$ and non-trivial $r_{\text{win}}$), showing that spending divergence budget without harm-targeted preference is inefficient and detectable. The full two-stage method restores the intended harm--stealth balance: \textbf{Amnesia (KL)} achieves strong forgetting with clean batch/window compliance, while \textbf{Amnesia (TV)} is more aggressive but generally more brittle because sparse, extreme allocations are harder to realize under discretization and availability constraints.

\begin{figure*}[h!]
    \centering
    \includegraphics[width=\textwidth]{images/Ablation.png}
\caption{\textbf{Ablations (ER‑ACE, Split CIFAR‑10).} Blue (left axis): \(-\mathrm{BWT}\) (impact). Red (right axis): an audit/budget metric.
(a) $x$: keep fraction \(f\); red: \(r_{\text{batch@95}}\) (\(\times10^{-1}\)).
(b) $x$: stealth radius \(\delta\); red: \(r_{\text{batch@95}}\) (\(\times10^{-2}\)).
(c) $x$: audit window \(W\); red: \(r_{\text{win}}\) (\(\times10^{-2}\)).
(d) $x$: buffer size (batch size in parentheses); red: \(e_{95}\) (\(\times10^{-2}\)).}
    \label{fig:ablation}
\end{figure*}

Fig.~\ref{fig:ablation}(a,b) show keep fraction $f$ and stealth radius $\delta$ effects (blue: higher $-\mathrm{BWT}\Rightarrow$ more forgetting; red: batch-level audit pressure). In Fig.~\ref{fig:ablation}(a), increasing \(f\) enlarges replay mass \(m{=}\lfloor f n_{\text{aux}}\rfloor\), letting the sampler realize \(p^\star\) more faithfully and select more harmful items; thus \(-\mathrm{BWT}\) rises, while \(r_{\text{batch@95}}\) falls due to reduced discretization pressure, matching Step~C’s bound \(\|\bar p{-}p^\star\|_1 \!\le\! C/m\) (larger \(m\Rightarrow\) less ``chunkiness''). In Fig.~\ref{fig:ablation}(b), increasing \(\delta\) expands the feasible ball, so the optimizer spends more divergence and both \(-\mathrm{BWT}\) and \(r_{\text{batch@95}}\) increase monotonically (harm--stealth trade-off).

Fig.~\ref{fig:ablation}(c) shows larger \(W\) makes the residual-budget scheduler more conservative (smaller \(\delta_t\)), slightly reducing \(-\mathrm{BWT}\) while driving window violation rate down, consistent with the window-compliance guarantee in \S\ref{sec:visibility}. Fig.~\ref{fig:ablation}(d) shows larger buffers strengthen the attack (more replay candidates and stronger within-class choice) while improving budget tracking, since larger buffers/batches reduce integerization effects and availability-driven failure modes.

Because prior baselines do not enforce auditable stealth/budget constraints, Table~\ref{tab:ours_vs_four_attacks} reports only shared metrics (ACC, $-\mathrm{BWT}$) and should be read as \emph{impact-at-any-cost} references. In task-incremental CIFAR-10 with ER, we compare Amnesia with three data-poisoning baselines; Table~\ref{tab:ours_vs_four_attacks} lists their ACC and $-\mathrm{BWT}$. Despite operating under explicit audit-aware constraints and sampler-only access (pixels/labels/parameters untouched), Amnesia achieves strong forgetting, highlighting index-only replay control as a practical threat surface in monitored deployments.

\begin{table}[t]
    \centering
    \begin{minipage}{0.55\textwidth}
        \centering
        \caption{Comparison between Amnesia and similar attacks.}
        \label{tab:ours_vs_four_attacks}
        \resizebox{\linewidth}{!}{%
        \begin{tabular}{l c}
            \hline
            \rowcolor{babyblue}
            \textbf{Attack Method} & \textbf{Result} \\
            \hline
            \rowcolor{AlgBack!45}
            Targeted Poisoning~\citep{TargetedPoisoningCL2022} & 19.6 (66.3) \\
            PACOL~\citep{PACOL2023}                            & 15.8 (68.7) \\
            BrainWash~\citep{Abbasi2024BrainWash}              & 25.5 (75.4) \\
            \textbf{Ours}                                     & \textbf{29.3 (82.7)} \\
            \hline
        \end{tabular}}
    \end{minipage}
    \hfill
    \begin{minipage}{0.40\textwidth}
        \centering
        \caption{End-to-end training time on CIFAR-10 (mean{\scriptsize$\pm$}std).}
        \label{tab:runtime-cifar10-transposed}
        \resizebox{\linewidth}{!}{%
        \begin{tabular}{l r}
            \hline
            \rowcolor{babyblue}
            \textbf{Method} & \textbf{Time (min)} \\
            \hline
            Baseline (no attack) & 11{:}34 {\scriptsize$\pm$ 0{:}28} \\
            \rowcolor{AlgBack!45}
            Amnesia (KL)         & 11{:}58 {\scriptsize$\pm$ 0{:}49} \\
            Amnesia (TV)         & 12{:}23 {\scriptsize$\pm$ 0{:}58} \\
            \hline
        \end{tabular}}
    \end{minipage}
\end{table}

\noindent\textbf{Runtime overhead.}
Relative to baseline, Amnesia adds modest end-to-end cost: +24\,s (\(\approx\)3.5\%) for KL and +49\,s (\(\approx\)7.1\%) for TV. This matches sampler-side costs: \(O(W)\) scheduling, \(O(C)\) quotas/audit, and \(O(C)\) (KL) vs.\ \(O(C\log C)\) (TV) projection, which remain small relative to forward/backward passes and support the practicality of sampler-only interference.
\par\smallskip\noindent

\noindent
Appendix~\ref{app:runtime-profile} (Table~\ref{tab:runtime-profile}) provides a component-level breakdown of this overhead into (i) projection, (ii) Rounding/Clipping, and (iii) Audit-and-Fix, highlighting that TV is dominated by sorting in projection and by swap counts in audit-and-fix.

\section{Conclusion}
We identified a realistic, auditable vulnerability in continual learning: \emph{sampler‑level} control of replay indices. \textbf{Amnesia} casts malicious replay composition as a divergence‑constrained program with explicit visibility ($\delta$) and mass ($f$) budgets, combining a harm‑driven \emph{preference} step with an exact \emph{projection} onto a TV/KL stealth ball. The resulting optimizers, a KL \emph{single-tilt} and a TV \emph{two-sided water-filling}, together with a windowed scheduler, are efficient, mass-preserving, and optimal within the audited set. Across Split CIFAR-10/100, CORe50, and TinyImageNet, with strong replay baselines (ER, ER-ACE, SCR, DER++), Amnesia reliably reduces accuracy and increases forgetting while satisfying audits in most settings. KL yields near‑maximal damage with high compliance; TV achieves higher impact but is brittle when the mass‑per‑class ratio is tight. Ablations confirm that both \emph{preference} and \emph{projection} are necessary to attain the intended impact–visibility–budget trade‑off.

A limitation of this study is that it assumes labeled buffers and a fixed nominal histogram $p_0$ for auditing; applicability to unlabeled or evolving label spaces remains unexplored. Moreover, when $m/C\!\approx\!1$, discretization and availability constraints can strain stealth, especially for TV. Future directions can develop sampler‑aware defenses (attested index selection, cryptographic logging, randomness beacons) and \emph{multi‑metric} auditors beyond class histograms (e.g., MMD/CUSUM, gradient telemetry), and extend the framework to generative or unlabeled replay, task‑free online CL, and RL/robotics, alongside theory linking $(\delta,f)$ to bounds on expected $-\mathrm{BWT}$ and detectability. More broadly, future work should test whether analogous selection/composition attack surfaces arise in non-replay CL and continual learning for language models (e.g., retrieval/memory selection, prompt/example selection, or routing). Our findings elevate replay index selection to a first-class security primitive, underscoring the need to secure the data path, {not just the model}, in deployed CL systems.

\bibliography{tmlr}
\bibliographystyle{tmlr}

% \appendix
% \clearpage
% \setcounter{page}{1}
% \maketitlesupplementary

% % ===========================
% % Appendix: Minimal Proofs & Algorithms (Only what is needed)
% % ===========================
\appendix
\section{Methodology Proofs}
\label{app:minimal}

\subsection{KL projector: optimality and monotone root search}
\label{app:kl-proof-min}

\begin{lemma}[KL-ball optimizer]
\label{lem:kl-opt-min}
For $p_0\in\Delta^C$ with $p_{0,c}>0$, $u\in\mathbb{R}^C$, and $\delta'\ge 0$, the problem
\[
\max_{p\in\Delta^C} u^\top p \ \text{s.t.}\ \mathrm{KL}(p\|p_0)\le \delta'
\]
admits an optimizer of the form
\[
p^\star(c)=\dfrac{p_{0,c}e^{\alpha u_c}}{\sum_j p_{0,j}e^{\alpha u_j}}
\]
for some $\alpha\ge 0$. If $u$ is non-constant, this optimizer is unique. If $u$ is constant, any feasible $p$ is optimal and the choice $\alpha=0$ yields $p^\star=p_0$.
\end{lemma}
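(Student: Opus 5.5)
Existence is immediate: the feasible set $\{p\in\Delta^C:\mathrm{KL}(p\|p_0)\le\delta'\}$ is compact, because $\mathrm{KL}(\cdot\|p_0)$ is continuous on $\Delta^C$ when $p_{0,c}>0$ (using $0\log 0=0$). It is also nonempty, since it contains $p_0$. A linear objective therefore attains its maximum on it.

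The constant case is trivial. If $u\equiv k$, then $u^\top p=k$ for every $p$, so every feasible $p$ is optimal, and $\alpha=0$ returns $p_0$. The same applies when $\delta'=0$: then $p_0$ is the only feasible point, and it corresponds to $\alpha=0$.

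For non-constant $u$ and $\delta'>0$, I would avoid a full KKT derivation and use a variational (Gibbs) inequality directly. Define $p_\alpha(c)\propto p_{0,c}e^{\alpha u_c}$ and $\Lambda(\alpha)=\log\sum_j p_{0,j}e^{\alpha u_j}$. For any $p\in\Delta^C$ we have the identity
\[
\mathrm{KL}(p\|p_\alpha)=\mathrm{KL}(p\|p_0)-\alpha\,u^\top p+\Lambda(\alpha)\ \ge 0.
\]
Applying it to $p=p_\alpha$ gives $\mathrm{KL}(p_\alpha\|p_0)=\alpha u^\top p_\alpha-\Lambda(\alpha)$. Subtracting the two, for any feasible $p$ with $\alpha>0$,
\[
\alpha\,(u^\top p_\alpha-u^\top p)=\mathrm{KL}(p\|p_\alpha)+\mathrm{KL}(p_\alpha\|p_0)-\mathrm{KL}(p\|p_0)\ \ge\ \mathrm{KL}(p\|p_\alpha)+\mathrm{KL}(p_\alpha\|p_0)-\delta'.
\]
So if we choose $\alpha^\star>0$ with $\mathrm{KL}(p_{\alpha^\star}\|p_0)=\delta'$, then $u^\top p_{\alpha^\star}-u^\top p\ge \mathrm{KL}(p\|p_{\alpha^\star})/\alpha^\star\ge0$. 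This shows $p_{\alpha^\star}$ is optimal. It also gives uniqueness: equality forces $\mathrm{KL}(p\|p_{\alpha^\star})=0$, i.e.\ $p=p_{\alpha^\star}$.

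The main obstacle is showing that such an $\alpha^\star$ exists, or handling the case where it does not. Let $g(\alpha)=\mathrm{KL}(p_\alpha\|p_0)=\alpha\Lambda'(\alpha)-\Lambda(\alpha)$. Then $g(0)=0$, $g$ is continuous, and $g'(\alpha)=\alpha\Lambda''(\alpha)=\alpha\,\mathrm{Var}_{p_\alpha}(u)>0$ for $\alpha>0$ when $u$ is non-constant. So $g$ is strictly increasing on $[0,\infty)$, which is also the monotonicity used by the bisection in Algorithm~\ref{alg:side_by_side_projections}. As $\alpha\to\infty$, $p_\alpha$ converges to $p_0$ restricted to $M=\arg\max u$ and renormalized, so $g(\alpha)\uparrow \log(1/p_0(M))$.

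If $\delta'<\log(1/p_0(M))$, the intermediate value theorem gives a unique $\alpha^\star$ and the argument above applies. If $\delta'\ge\log(1/p_0(M))$, the constraint is inactive. The optimum is then the limit distribution $p_0(\cdot\mid M)$, which is feasible and attains $\max_c u_c$. This point is only of the stated form with $\alpha=+\infty$. In this boundary regime I would either assume $\delta'<\log(1/p_0(M))$, or read the lemma's form as including $\alpha=\infty$. Uniqueness still holds there when $|M|=1$; when $|M|>1$, any feasible distribution supported on $M$ is optimal, so uniqueness can fail. This caveat should be recorded explicitly.
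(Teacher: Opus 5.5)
Your proof is correct, and it takes a genuinely different route from the paper's. The paper's sketch writes a Lagrangian with multipliers for normalization and for the KL constraint. It reads the exponential form off the stationarity condition $\log(p_c/p_{0,c})=\alpha u_c+\beta$ with $\alpha=1/\nu$, and invokes strict convexity of $\mathrm{KL}(\cdot\|p_0)$ for uniqueness. The existence of a budget-saturating $\alpha$ is deferred to Lemma~\ref{lem:kl-mono-min}.

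You instead give a direct sufficiency certificate. The Gibbs identity $\mathrm{KL}(p\|p_\alpha)=\mathrm{KL}(p\|p_0)-\alpha u^\top p+\Lambda(\alpha)$ yields $u^\top p_{\alpha^\star}-u^\top p\ge \mathrm{KL}(p\|p_{\alpha^\star})/\alpha^\star$ for every feasible $p$. Optimality and uniqueness then follow at once, with no constraint qualification, no sign argument for the multiplier, and no separate treatment of boundary points of the simplex. You also fold in the monotonicity $g'(\alpha)=\alpha\,\mathrm{Var}_{p_\alpha}(u)$ to produce $\alpha^\star$.

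The stationarity route explains where the tilt comes from. As written, however, it only derives necessary conditions, tacitly assumes $\nu>0$, and omits the multipliers for $p_c\ge 0$. Your route is complete and even gives a quantitative optimality gap.

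The boundary caveat you record is real, and it is exactly where the paper's sketch breaks. Suppose $u$ is non-constant and $\delta'\ge\log(1/p_0(M))$, with $M=\arg\max_c u_c$. The optimal value is then $\max_c u_c$, which only distributions supported on $M$ attain. Every $p_\alpha$ with finite $\alpha$ has full support, so no finite $\alpha\ge 0$ gives an optimizer, and the lemma as stated is false there. For example, take $C=2$, $p_0=(\tfrac12,\tfrac12)$, $u=(0,1)$, $\delta'=1$; the optimizer is $(0,1)$.

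When in addition $|M|>1$ and $\delta'>\log(1/p_0(M))$, the optimal set is a nontrivial piece of a face of the simplex. The strict-convexity uniqueness argument therefore fails as well. In the same regime, the doubling loop of \textsc{ProjectKL} never terminates, and Lemma~\ref{lem:kl-mono-min}'s claim that some $\alpha$ attains $g(\alpha)=\delta'$ is false. The right repair is the one you suggest: add the hypothesis $\delta'<\log(1/p_0(M))$, or admit $\alpha=+\infty$.
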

\begin{proof}[Sketch]
Lagrangian $\mathcal{L}(p,\lambda,\nu)=u^\top p-\lambda(\sum_c p_c-1)-\nu(\mathrm{KL}(p\|p_0)-\delta')$, $\nu\ge 0$.
Stationarity: $u_c-\lambda-\nu(\log\frac{p_c}{p_{0,c}}+1)=0 \Rightarrow \log\frac{p_c}{p_{0,c}}=\alpha u_c+\beta$, so $p_c\propto p_{0,c}e^{\alpha u_c}$. Strict convexity of $\mathrm{KL}(\cdot\|p_0)$ on the simplex implies uniqueness when $u$ is non-constant; if $u$ is constant, $u^\top p$ is constant on the feasible set and any feasible $p$ is optimal (the tilt with $\alpha=0$ returns $p_0$).
\end{proof}

\begin{lemma}[Monotonicity of the KL radius]
\label{lem:kl-mono-min}
Let $p_\alpha(c)=\dfrac{p_{0,c}e^{\alpha u_c}}{\sum_j p_{0,j}e^{\alpha u_j}}$ and $g(\alpha)=\mathrm{KL}(p_\alpha\|p_0)$. Then
\[
g'(\alpha) \;=\; \alpha\,\mathrm{Var}_{p_\alpha}(u)\ \ge\ 0,
\]
with $>$ for $\alpha>0$ when $u$ is non-constant. Thus $g$ is strictly increasing on $(0,\infty)$ and bisection/Newton finds the unique $\alpha$ with $g(\alpha)=\delta'$.
\end{lemma}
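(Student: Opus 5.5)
The plan is a direct computation, treating $p_\alpha$ as an exponential family in the natural parameter $\alpha$. Write $Z(\alpha)=\sum_j p_{0,j}e^{\alpha u_j}$ and $A(\alpha)=\log Z(\alpha)$ (the log-partition function). Then $\log\frac{p_\alpha(c)}{p_{0,c}}=\alpha u_c-A(\alpha)$, so
\[
g(\alpha)=\sum_c p_\alpha(c)\big(\alpha u_c-A(\alpha)\big)=\alpha\,\mu(\alpha)-A(\alpha),\qquad \mu(\alpha):=\mathbb{E}_{p_\alpha}[u].
\]
This reduces everything to the two standard cumulant identities. Differentiating $Z$ under the finite sum gives $A'(\alpha)=\mu(\alpha)$. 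Differentiating $\mu(\alpha)=Z'(\alpha)/Z(\alpha)$ once more gives $\mu'(\alpha)=Z''/Z-(Z'/Z)^2=\mathbb{E}_{p_\alpha}[u^2]-\mu(\alpha)^2=\mathrm{Var}_{p_\alpha}(u)$.

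Substituting, $g'(\alpha)=\mu+\alpha\mu'-A'=\alpha\,\mathrm{Var}_{p_\alpha}(u)$, which is the claimed formula. Nonnegativity for $\alpha\ge 0$ is immediate. For strictness, note that since $p_{0,c}>0$, every $p_\alpha(c)>0$, so $p_\alpha$ has full support. The variance of a non-constant $u$ under a full-support distribution is therefore strictly positive. Hence $g'(\alpha)>0$ for all $\alpha>0$, and $g$ is strictly increasing on $[0,\infty)$, with $g(0)=0$.

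For the root-finding claim, I also need that $\delta'$ lies in the range of $g$. The function $g$ is continuous and smooth, being a finite sum of smooth terms. As $\alpha\to\infty$, $p_\alpha$ concentrates on $S=\arg\max_c u_c$, and $g(\alpha)\to-\log p_0(S)=:g_\infty$. So for any $\delta'\in(0,g_\infty)$, the intermediate value theorem together with strict monotonicity gives a unique $\alpha$ with $g(\alpha)=\delta'$, and bisection on a doubling bracket, as in ProjectKL, converges to it. When $\delta'\ge g_\infty$ the constraint is inactive. In that regime the solution is the limit point supported on $S$, and the doubling loop must be capped.

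I expect this range and limit caveat to be the main obstacle, not the derivative computation. The statement as written implicitly assumes $\delta'<-\log p_0(\arg\max u)$. I would state that assumption explicitly, and verify the limit by writing $p_\alpha(c)\propto p_{0,c}e^{-\alpha(\max u-u_c)}$, whose off-argmax terms vanish.
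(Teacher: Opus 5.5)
Your argument is the paper's own: both write $g(\alpha)=\alpha\,\mathbb{E}_{p_\alpha}[u]-\log Z(\alpha)$ and use $(\log Z)'=\mathbb{E}_{p_\alpha}[u]$ and $(\log Z)''=\mathrm{Var}_{p_\alpha}(u)$ to get $g'(\alpha)=\alpha\,\mathrm{Var}_{p_\alpha}(u)$, and your full-support argument supplies the strictness step the sketch leaves implicit. Your range caveat is correct and goes beyond the paper: $g$ increases only to $-\log p_0(S)$ with $S=\arg\max_c u_c$, so for $\delta'\ge -\log p_0(S)$ no root exists, the doubling loop in \textsc{ProjectKL} never terminates, and the optimizer is not an exponential tilt with any finite $\alpha$.
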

\begin{proof}[Sketch]
$g(\alpha)=\alpha\,\mathbb{E}_{p_\alpha}[u]-\log Z(\alpha)$, where $Z(\alpha)=\sum_c p_{0,c}e^{\alpha u_c}$. Using $\frac{d}{d\alpha}\log Z=\mathbb{E}_{p_\alpha}[u]$ and $\frac{d^2}{d\alpha^2}\log Z=\mathrm{Var}_{p_\alpha}(u)$, we get $g'(\alpha)=\alpha\,\mathrm{Var}_{p_\alpha}(u)$.
\end{proof}

The KL-constrained problem says: maximize a linear score $u^\top p$ while staying close to $p_0$ in KL. The Lagrange multiplier conditions force the optimal solution to have \emph{log-ratios} $\log\!\frac{p_c}{p_{0,c}}$ that are linear in utilities $u_c$, which is exactly the \emph{exponential tilt} $p_c \propto p_{0,c}e^{\alpha u_c}$. Because KL is strictly convex, this tilted form is the unique optimizer on the KL ball (when $u$ is non-constant). As $\alpha$ increases, the solution places progressively more mass on higher-utility classes; one can show the KL radius grows like $\alpha\times\text{variance}(u)$ under the tilted distribution, which is strictly increasing unless $u$ is constant. Hence a simple one-dimensional search (bisection/Newton) finds the unique $\alpha$ that hits the budget.

\subsection{TV projector: optimality of two-sided water-filling}
\label{app:tv-proof-min}

With a linear objective and an $\ell_1$ distance budget to $p_0$, the best way to improve $u^\top p$ is always to move probability from \emph{worse} classes to \emph{better} classes. Any plan that moves mass from a not-so-bad donor to a not-so-good receiver can be improved by instead using the \emph{lowest} utility donor and the \emph{highest} utility receiver for the same budget cost. This “exchange” logic justifies sorting once by $u_c$ and then repeatedly transferring mass from the bottom to the top until the budget is exhausted or you hit the $[0,1]$ bounds—exactly the water-filling procedure.

\begin{lemma}[TV-ball optimizer]
\label{lem:tv-opt-min}
For $\delta'\ge 0$, the solution to $\max_{p\in\Delta^C} u^\top p$ subject to $\tfrac12\|p-p_0\|_1\le \delta'$ is obtained by moving probability from the lowest-$u$ classes to the highest-$u$ classes until the $\ell_1$ budget $2\delta'$ is exhausted or box constraints hit.
\end{lemma}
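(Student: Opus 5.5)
Since the objective $u^\top p$ is linear and the feasible set $\{p\in\Delta^C:\tfrac12\|p-p_0\|_1\le\delta'\}$ is a compact polytope, an optimizer exists. The proof will therefore be a direct exchange argument: give an upper bound on $u^\top p-u^\top p_0$ for every feasible $p$, and show the water-filling output attains it.

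\textbf{Step 1 (reparametrize by mass moved).} For feasible $p$, write $d=p-p_0$, $d^+=\max(d,0)$, $d^-=\max(-d,0)$. Because $\sum_c d_c=0$, we have $\sum_c d^+_c=\sum_c d^-_c=:s$ with $s\le\delta'$. The constraints on $p$ become $0\le d^+_c\le 1-p_{0,c}$ and $0\le d^-_c\le p_{0,c}$, and
\[
u^\top p-u^\top p_0=\textstyle\sum_c u_c d^+_c-\sum_c u_c d^-_c .
\]
The problem thus decouples, for each total $s$, into a "receiver" and a "donor" fractional-knapsack problem. Maximizing $\sum u_c d^+_c$ subject to $\sum d^+_c=s$ and box caps $1-p_{0,c}$ is solved by filling the highest-$u$ classes first. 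Minimizing $\sum u_c d^-_c$ subject to $\sum d^-_c=s$ and caps $p_{0,c}$ is solved by draining the lowest-$u$ classes first. Each is the standard greedy fractional knapsack; we prove it by the exchange argument from the preceding paragraph. If some $d^+$ places mass on class $i$ while a class $j$ with $u_j>u_i$ is below its cap, shifting $\varepsilon$ from $i$ to $j$ weakly increases the value.

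\textbf{Step 2 (combine and optimize over $s$).} Let $R(s)$ be the greedy receiver gain and $D(s)$ the greedy donor cost. The difference $R(s)-D(s)$ is concave and piecewise linear in $s$. Its slope at $s$ is (utility of the current receiver) minus (utility of the current donor), which is nonincreasing. So the optimal $s$ is either $\delta'$, or the point where the slope reaches $0$, i.e.\ where the receiver and donor pointers meet. This is exactly the stopping rule "$b=0$ or $\ell\ge r$" of ProjectTV.

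\textbf{Step 3 (match the two-pointer procedure).} One subtlety: separate greedy $d^+$ and $d^-$ might both use the same class, but that never helps. When the receiver index and donor index coincide, the slope is $\le 0$, so stopping is optimal. Hence the simultaneous two-pointer transfers in Algorithm~\ref{alg:side_by_side_projections} produce $d^+$ and $d^-$ with disjoint supports, agree with the decoupled optimum, and $p=p_0+d^+-d^-\in\Delta^C$. Ties in $u$ give non-unique but equal-value optima.

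I expect Step 3 to be the main obstacle: rigorously reconciling the decoupled relaxation (which allows $d^+_c,d^-_c>0$ simultaneously) with the algorithm's single pass. I would handle it by noting that netting $d^+_c$ against $d^-_c$ lowers the $\ell_1$ cost without changing the objective, so it is without loss of generality.
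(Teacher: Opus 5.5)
Your argument is correct and uses the same exchange idea as the paper's sketch, but organizes it differently. The paper argues directly on transfer plans: any feasible $p$ that routes mass through a non-extreme donor or receiver can be improved by rerouting it to the lowest-$u$ donor and highest-$u$ receiver, so sorting once and transferring greedily is optimal. You instead split $p-p_0$ into $d^+$ and $d^-$ of common mass $s$, solve a receiver and a donor fractional knapsack for each fixed $s$, and then choose $s$ by concavity of $R(s)-D(s)$.

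Your decomposition makes explicit two points the sketch leaves tacit. First, every feasible $p$ is a transport plan of mass $\tfrac12\|p-p_0\|_1$. Second, spending the budget until it or the box constraints bind is actually optimal, which is your nonnegative-slope observation. The paper's version buys brevity.

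Your Step 3 is easier than you expect. Netting already makes the relaxation exact, since any relaxed pair gives a feasible $p=p_0+d^+-d^-$ with $\tfrac12\|p-p_0\|_1\le s$ and the same objective. Moreover, disjointness of supports is automatic here. The top class alone can receive $1-p_{0,(C)}=\sum_{c\neq(C)}p_{0,c}$, which is the entire donor capacity of the other classes. So for $s\le 1-p_{0,(C)}$ the greedy $d^+$ sits only on $(C)$, while the greedy $d^-$ never reaches it. The slope is $u_{(C)}-u_{\mathrm{donor}(s)}\ge 0$ on that interval and at most $u_{(C-1)}-u_{(C)}\le 0$ beyond it. Hence $s^\star=\min\{\delta',1-p_{0,(C)}\}$, which is exactly where ProjectTV stops: its receiver pointer $r$ stays at $C$ until all donors below it are exhausted.
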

\begin{proof}[Sketch]
This is a linear program over a simplex slice. If any feasible $p$ moves mass from a donor $i$ to a receiver $j$ with $u_i>u_{i'}$ or $u_j<u_{j'}$, swapping to use the \emph{lowest} donor and \emph{highest} receiver improves $u^\top p$ without changing feasibility (standard exchange argument). Sorting once and greedily transferring realizes the optimum.
\end{proof}

\subsection{Rounding error bounds (largest remainder / Hamilton)}
\label{app:rounding-min}

We turn real-valued target counts $m p^\star$ into integers by taking floors and then giving the leftover items to the classes with the largest fractional parts. This guarantees each class’s realized fraction $\bar p_c=q_c/m$ is at most one item off from its target, i.e., $|\bar p_c-p^\star_c|\le 1/m$. Since at most $C$ coordinates can differ and each by at most $1/m$, the total $\ell_1$ error is at most $C/m$. In short, rounding introduces only \emph{tiny}, explicitly bounded distortions.

\begin{lemma}[Rounding bounds]
\label{lem:round-bounds-min}
Let $p^\star\in\Delta^C$, $m\in\mathbb{N}$, $q_c=\lfloor m p^\star_c\rfloor$, and assign the $m-\sum_c q_c$ leftover units to the largest fractional remainders of $m p^\star_c$. With $\bar p=q/m$,
\[
\|\bar p - p^\star\|_\infty \le \frac{1}{m},
\qquad
\|\bar p - p^\star\|_1 \le \frac{C}{m}.
\]
\end{lemma}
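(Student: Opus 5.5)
The plan is to work coordinatewise with the real targets $x_c := m p^\star_c$ and their fractional parts $r_c := x_c - \lfloor x_c\rfloor \in [0,1)$. The argument has three steps: (i) show that the leftover count $k := m - \sum_c \lfloor x_c\rfloor$ is an integer with $0\le k\le C-1$; (ii) conclude that the largest-remainder rule gives each class at most one extra unit, so each coordinate's error is at most one unit; (iii) rescale by $m$ and sum over classes.

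For step (i), since $p^\star\in\Delta^C$ we have $\sum_c x_c = m$. Hence
\[
k = \sum_c (x_c - \lfloor x_c\rfloor) = \sum_c r_c .
\]
Because $m$ and every $\lfloor x_c\rfloor$ are integers, $k$ is a non-negative integer. Because every $r_c<1$, we get $k<C$, and therefore $k\le C-1$.

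This is the step I expect to carry the whole argument. It is exactly what guarantees that ``assign the leftover units to the $k$ largest remainders'' hands out at most one unit per class. No class can receive two units, so the per-coordinate deviation is controlled by a single unit. Ties among equal remainders can be broken arbitrarily without affecting the argument.

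For step (ii), the final quota is either $q_c=\lfloor x_c\rfloor$ or $q_c=\lfloor x_c\rfloor+1$. In the first case $q_c - x_c = -r_c \in (-1,0]$. In the second case $q_c - x_c = 1-r_c \in (0,1]$. Either way $|q_c - x_c|\le 1$.

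For step (iii), dividing by $m$ gives $|\bar p_c - p^\star_c| = |q_c - x_c|/m \le 1/m$ for every $c$, which is the $\ell_\infty$ bound. Summing this over the $C$ coordinates gives the $\ell_1$ bound $\|\bar p - p^\star\|_1\le C/m$. As a side remark, one could sharpen the constant, since the positive and negative deviations must cancel given $\sum_c q_c = m = \sum_c x_c$. The stated bound does not need this, so I would not pursue it.
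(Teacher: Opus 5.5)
Your proof is correct and follows the same coordinatewise route as the paper's: bound each coordinate's error by one unit, divide by $m$ for the $\ell_\infty$ bound, and sum over the $C$ coordinates for the $\ell_1$ bound. Your step (i), which shows the leftover count $k=\sum_c r_c$ is an integer with $k\le C-1$ so that no class receives two units, supplies the justification that the paper's one-line proof leaves implicit.
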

\begin{proof}
Each coordinate changes by at most $1/m$; at most $C$ coordinates change by $1/m$. Summing yields the $\ell_1$ bound; the $\ell_\infty$ bound is immediate.
\end{proof}

\subsection{Audit-and-Fix: termination and decrease}
\label{app:audit-min}

\paragraph{TV case.}
\begin{algorithm}[h]
\caption{\textsc{AuditFixTV}$(q,p_0,\delta')$}
\label{alg:audit-tv-min}
\begin{algorithmic}[1]
\State $\bar p\gets q/m$
\While{$\mathrm{TV}(\bar p\|p_0)>\delta'$}
  \State pick donor $i$ with $\bar p_i>p_{0,i}$ and receiver $j$ with $\bar p_j<p_{0,j}$
  \State $q_i\gets q_i-1$;\ $q_j\gets q_j+1$;\ $\bar p\gets q/m$
\EndWhile
\State \Return $q$
\end{algorithmic}
\end{algorithm}

\emph{TV:} If your realized mix $\bar p$ strays past the TV budget, moving \emph{one} item from any class that is \emph{above} its nominal level $p_{0,c}$ to any class that is \emph{below} nominal shrinks the $\ell_1$ gap by exactly $2/m$, so TV drops by $1/m$ every move. After finitely many item swaps, you must re-enter the budget. \emph{KL:} KL is a sum of convex terms $x\log(x/p_0)$. Moving one item from the class with the largest log-ratio $\log(\bar p_c/p_{0,c})$ to the class with the smallest log-ratio makes the KL strictly smaller. Repeating this discrete “steepest decrease” step guarantees KL falls below the threshold in finitely many swaps.

\begin{lemma}[Decrease \& termination for TV]
\label{lem:audit-tv-min}
Each unit transfer in Alg.~\ref{alg:audit-tv-min} decreases $\|\bar p-p_0\|_1$ by $2/m$, hence $\mathrm{TV}$ by $1/m$. Therefore the loop terminates in at most
\[
N_{\max}\;=\;\Big\lceil \frac{\big(\|\bar p^{(0)}-p_0\|_1-2\delta'\big)_+}{2/m} \Big\rceil
\]
transfers.
\end{lemma}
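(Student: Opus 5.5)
We compute the change in $\|\bar p-p_0\|_1$ produced by a single iteration of Alg.~\ref{alg:audit-tv-min} coordinate by coordinate, then sum the per-step decrease to bound the number of iterations. Write $D=\|\bar p-p_0\|_1$ at the start of an iteration, and let $i$, $j$ be the chosen donor and receiver. First we check that the loop body is well posed. Whenever $\mathrm{TV}(\bar p\|p_0)>\delta'\ge 0$ we have $\bar p\neq p_0$. Since $\sum_c\bar p_c=\sum_c p_{0,c}=1$, some coordinate lies strictly above nominal and some strictly below. So a donor with $\bar p_i>p_{0,i}$ and a receiver with $\bar p_j<p_{0,j}$ both exist. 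Moreover $q_i\ge 1$, because $\bar p_i>p_{0,i}>0$, so the decrement keeps $q$ nonnegative. Mass $m$ is preserved.

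For the decrease, only coordinates $i$ and $j$ change, by $-1/m$ and $+1/m$ respectively. The term $|\bar p_i-p_{0,i}|$ drops by exactly $1/m$ provided $\bar p_i-p_{0,i}\ge 1/m$, and similarly for $j$ provided $p_{0,j}-\bar p_j\ge 1/m$. This is the main obstacle. The gaps $\bar p_i-p_{0,i}$ are strictly positive but need not be at least $1/m$, because $p_0$ is real-valued (smoothed) and not on the $1/m$ grid. In that case the donor can overshoot below nominal, and the decrease is only $2\min(\text{gap},1/m)$ rather than $1/m$.

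To handle this, I would first try to make the donor/receiver choice canonical: pick the donor with the largest excess and the receiver with the largest deficit. Then I would split into two cases.
\begin{itemize}
\item \emph{Generic case.} When $p_0$ lies on the $1/m$ grid (or every gap is at least $1/m$), each term decreases by exactly $1/m$. Hence $D$ drops by exactly $2/m$ and $\mathrm{TV}$ drops by $1/m$, as stated.
\item \emph{Off-grid case.} I would state the claim as ``decreases by $2/m$ whenever the gaps are at least $1/m$.'' In general the change in $D$ is nonpositive, and the decrease is bounded below by $2\min\{\bar p_i-p_{0,i},\,p_{0,j}-\bar p_j,\,1/m\}$.
\end{itemize}
If exactness is required, I would note the lemma implicitly assumes the grid condition, and make that assumption explicit.

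Given the exact per-step decrease of $2/m$ in $D$, termination is immediate. After $k$ transfers, $D_k=D_0-2k/m$, and the loop stops once $D_k\le 2\delta'$. The smallest such $k$ is $\lceil (D_0-2\delta')_+/(2/m)\rceil$, which is $N_{\max}$. If the initial audit already passes, the positive part gives zero transfers. Nonnegativity of $D$ guarantees the loop cannot run beyond this point, which closes the argument.
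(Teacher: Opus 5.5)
Your diagnosis of the obstacle is correct, and it is sharper than the paper's own proof. That proof consists only of the claim that a unit transfer ``reduces the two absolute deviations by $1/m$ each''. This is precisely the step that silently requires both chosen gaps to be at least $1/m$. In the lattice case your argument is the paper's argument.

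The genuine problem is your off-grid fallback. If the donor gap $g=\bar p_i-p_{0,i}$ lies in $(0,1/m)$, the transfer overshoots and leaves a deficit $1/m-g$. The donor's contribution to $D$ therefore decreases by $g-|g-1/m|=\min\{1/m,\,2g-1/m\}$, which is \emph{negative} when $g<1/(2m)$; the receiver behaves symmetrically. Hence neither ``the change in $D$ is nonpositive'' nor the lower bound $2\min\{\bar p_i-p_{0,i},\,p_{0,j}-\bar p_j,\,1/m\}$ holds. Choosing the largest excess and largest deficit does not rescue it.

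A concrete failure: take $C=2$, $m=2$, $p_0=(0.6,0.4)$, $q=(1,1)$, $\delta'=0.05$. Then $\mathrm{TV}=0.1>\delta'$. The only admissible transfer gives $q=(2,0)$ with $\mathrm{TV}=0.4$, the next transfer returns to $(1,1)$, and the loop cycles forever, whereas the formula gives $N_{\max}=1$. More generally, no sequence of transfers can terminate whenever $\delta'<\min_{q}\mathrm{TV}(q/m\|p_0)$. For instance, this happens when the residual scheduler outputs $\delta'=0$ and $mp_0\notin\mathbb{Z}^C$.

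So the statement cannot be proved as written; it needs an extra hypothesis, and the right one is $mp_0\in\mathbb{Z}^C$. This is a genuine restriction: when $p_0$ is a buffer histogram over $n_{\text{aux}}$ items, possibly smoothed, it need not hold for $m=\lfloor f n_{\text{aux}}\rfloor$.

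Your alternative condition ``every gap is at least $1/m$'' is not sufficient for the termination count. It must hold at every iteration, and a transfer can turn an excess of $1.5/m$ into $0.5/m$. The lattice condition, by contrast, is an invariant: $m(\bar p-p_0)=q-mp_0\in\mathbb{Z}^C$ at every step. So every strictly positive gap is at least $1/m$, and each transfer lowers $D$ by exactly $2/m$. Your count $D_k=D_0-2k/m$ then yields exactly $N_{\max}$. State that hypothesis explicitly, and drop the off-grid bullet rather than replacing it with a weaker (and false) decrease bound.
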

\begin{proof}
Moving $1/m$ from a coordinate above $p_{0,i}$ to one below $p_{0,j}$ reduces the two absolute deviations by $1/m$ each; others unchanged.
\end{proof}

\paragraph{KL case.}
\begin{algorithm}[h]
\caption{\textsc{AuditFixKL}$(q,p_0,\delta')$}
\label{alg:audit-kl-min}
\begin{algorithmic}[1]
\State $\bar p\gets q/m$
\While{$\mathrm{KL}(\bar p\|p_0)>\delta'$}
  \State donor $i\in\arg\max_c \log\!\frac{\bar p_c}{p_{0,c}}$;\quad receiver $j\in\arg\min_c \log\!\frac{\bar p_c}{p_{0,c}}$
  \State $q_i\gets q_i-1$;\ $q_j\gets q_j+1$;\ $\bar p\gets q/m$
\EndWhile
\State \Return $q$
\end{algorithmic}
\end{algorithm}

\begin{lemma}[Strict decrease for KL (discrete step)]
\label{lem:audit-kl-min}
Let $s=1/m$. If $\bar p_i>p_{0,i}$ and $\bar p_j<p_{0,j}$, then for $p'=\bar p - s e_i + s e_j$,
\[
\mathrm{KL}(p'\|p_0)-\mathrm{KL}(\bar p\|p_0)
= h(p_i-s)-h(p_i) + h(p_j+s)-h(p_j) < 0,
\]
where $h(x)=x\log(x/p_0)$ is convex (coordinatewise). Choosing $i,j$ by extreme log-ratios ensures decrease until feasibility $\mathrm{KL}\le\delta'$ holds, so Alg.~\ref{alg:audit-kl-min} terminates in finitely many steps.
\end{lemma}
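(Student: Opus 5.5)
The plan is to reduce the claim to a one-dimensional convexity estimate on the two coordinates that change, and then obtain termination from a finiteness argument on integer quota vectors. Only coordinates $i$ and $j$ move, so the displayed identity for $\mathrm{KL}(p'\|p_0)-\mathrm{KL}(\bar p\|p_0)$ is immediate from $\mathrm{KL}(p\|p_0)=\sum_c h_c(p_c)$ with $h_c(x)=x\log(x/p_{0,c})$. Here I index $h$ by the class, since the base $p_{0,c}$ differs between $i$ and $j$. Each $h_c$ is convex on $[0,\infty)$, with $h_c'(x)=\log(x/p_{0,c})+1$ and $h_c''(x)=1/x>0$.

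For the sign of the difference, I would bound each secant by the derivative at the appropriate endpoint. Convexity gives
\[
h_j(p_j+s)-h_j(p_j)\le s\,h_j'(p_j+s),\qquad h_i(p_i)-h_i(p_i-s)\ge s\,h_i'(p_i-s).
\]
Hence the change in KL is at most $s\big[\log\frac{p_j+s}{p_{0,j}}-\log\frac{p_i-s}{p_{0,i}}\big]$, and strict convexity makes this strict. The key step is therefore to show
\[
\frac{p_j+s}{p_{0,j}}\le\frac{p_i-s}{p_{0,i}},
\]
i.e.\ that the donor's post-move log-ratio still dominates the receiver's post-move log-ratio. The extreme-log-ratio rule in Alg.~\ref{alg:audit-kl-min} is designed for exactly this: $i$ maximizes and $j$ minimizes $\log(\bar p_c/p_{0,c})$. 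Because $\mathrm{KL}(\bar p\|p_0)>\delta'\ge 0$, we have $\bar p\ne p_0$, so the maximum log-ratio is positive and the minimum is negative. This recovers the hypotheses $\bar p_i>p_{0,i}$ and $\bar p_j<p_{0,j}$.

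I expect this inequality to be the main obstacle. The hypotheses $\bar p_i>p_{0,i}$ and $\bar p_j<p_{0,j}$ alone do not imply it when the step $s=1/m$ is coarse relative to $p_{0,i},p_{0,j}$, which is precisely the $m/C\approx 1$ regime. A single unit move can then overshoot and make the receiver heavier relative to its baseline than the donor becomes. My first approach would be to prove the lemma under the explicit granularity condition that the inequality holds. A clean sufficient condition is that $\bar p_i-p_{0,i}\ge s$ and $p_{0,j}-\bar p_j\ge s$, since then $(p_i-s)/p_{0,i}\ge 1\ge (p_j+s)/p_{0,j}$. If that condition fails, I would fall back to the exact two-point difference and verify its sign directly.

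For termination, I would argue as follows. The quota vectors $q\in\mathbb N_0^C$ with $\sum_c q_c=m$ form a finite set. If every executed swap strictly decreases KL, no state can repeat, so the loop halts after finitely many steps. To ensure that it halts with $\mathrm{KL}\le\delta'$ rather than merely stalling, I would add two assumptions. First, $\delta'$ is at least the minimum KL over this lattice; the audit-feasible set is nonempty because, for example, Hamilton rounding of $m p_0$ together with Lemma~\ref{lem:round-bounds-min} keeps $\bar p$ within $C/m$ of $p_0$. Second, the granularity condition holds whenever $\mathrm{KL}>\delta'$. I would state these as the precise regime in which the lemma's claim holds.
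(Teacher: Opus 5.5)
Your proposal is correct, and it departs from the paper exactly where the paper's argument breaks down.

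The paper's sketch relies on two claims. First, convexity of $h$ makes a unit move from an over-weighted to an under-weighted coordinate decrease $\sum_c h(p_c)$. Second, extreme log-ratios give the steepest decrease among unit moves. The first claim only controls first-order behaviour: the directional derivative $s\bigl[\log(\bar p_j/p_{0,j})-\log(\bar p_i/p_{0,i})\bigr]$ is negative, but nothing controls the finite step $s=1/m$. Your secant bounds, taken at the post-move points, are the right tool. The no-overshoot condition $(\bar p_j+s)/p_{0,j}\le(\bar p_i-s)/p_{0,i}$ that you isolate is the natural missing hypothesis, and without some condition of this kind the statement is false.

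A two-class example shows this. Take $C=2$, $p_0=(0.9,0.1)$, $m=2$ and $\bar p=(1,0)$; note that this $\bar p$ is exactly the Hamilton rounding of $mp_0=(1.8,0.2)$. The hypotheses hold and the extreme choice $i=1$, $j=2$ is forced. Yet $\mathrm{KL}$ rises from $\log\frac{10}{9}\approx 0.105$ to $\log\frac53\approx 0.511$. For any $\delta'<\log\frac{10}{9}$ (including $\delta'=0$, which the scheduler can output), \textsc{AuditFixKL} oscillates between these two states forever.

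The failure is not confined to tiny $m$. Take $m=10$, $p_0=(0.79,0.2,0.01)$ and $\bar p=(0.9,0.1,0)$. The extreme rule moves a unit from class $1$ to class $3$ and raises $\mathrm{KL}$ from about $0.048$ to $0.171$. Moving from class $1$ to class $2$ instead would lower it to about $0.010$, so the ``steepest decrease'' claim fails as well. Small smoothed baseline entries with $p_{0,c}\ll 1/m$ are exactly what trigger this behaviour.

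Two corrections to your termination paragraph:
\begin{itemize}
\item Your first assumption (lattice feasibility) is implied by your second. If every executed swap strictly decreases $\mathrm{KL}$ on the finite lattice, the loop must exit, and it can only exit with $\mathrm{KL}\le\delta'$. So a feasible lattice point exists automatically, and the first assumption can be dropped.
\item The justification you give for feasibility does not work. Closeness of the rounded $mp_0$ to $p_0$ in $\ell_1$ does not imply $\mathrm{KL}\le\delta'$. The $C=2$ example above is precisely that rounding, and $\delta'=0$ is infeasible whenever $mp_0\notin\mathbb{N}_0^C$.
\end{itemize}
With these adjustments, your conditional lemma is a sound replacement for the paper's statement.
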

\begin{proof}[Sketch]
Convexity of $h$ implies the discrete move from an over-weighted to an under-weighted coordinate reduces the sum $\sum_c h(p_c)$; picking extremes gives the steepest decrease among unit moves.
\end{proof}

\subsection{Windowed scheduler guarantee}
\label{app:window-min}

Divergence is convex, so the divergence of a window \emph{average} of batches is at most the \emph{average} of their divergences. The scheduler keeps a running ledger of how much divergence has already been “spent” in recent steps and assigns the next step only the \emph{residual} budget so that, for each window length $L\le W$, the cumulative divergence over the last $L-1$ steps plus the current step is bounded by $L\delta$ (conservative), which implies the window-average divergence stays within $\delta$.

\begin{lemma}[Residual-budget scheduler implies window stealth]
\label{lem:window-min}
Let $r_t=\mathrm{Div}(\bar p_t\|p_0)$ and define
\[
\delta'_t \;=\; \max\!\Big\{0,\ \min_{1\le L\le |\mathcal{R}|}\Big[L\,\delta - \sum_{\ell=t-L+1}^{\,t-1} r_\ell\Big]\Big\}.
\]
If $\mathrm{Div}$ is convex in its first argument and $r_t\le \delta'_t$ for all $t$, then for every window $1\le L\le W$,
\[
\mathrm{Div}\!\Big(\tfrac1L\sum_{s=t-L+1}^{t}\bar p_s\ \Big\|\ p_0\Big) \ \le\ \delta.
\]
\end{lemma}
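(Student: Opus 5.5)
The argument has two parts: a bookkeeping inequality showing that the residual budget keeps every trailing sum of per-batch divergences within $L\delta$, followed by one application of convexity (Jensen). The first part carries the content. Throughout, I read the minimization range in $\delta'_t$ as running over all window lengths $L=1,\dots,\min(W,t)$, where the ring buffer $\mathcal R$ supplies the $L-1\le W-1$ most recent values $r_{t-1},\dots,r_{t-L+1}$. This matches Alg.~\ref{alg:amnesia-replay}, where the empty sum handles the term with no past values. I make this explicit first, because the lemma's stated index range $1\le L\le|\mathcal R|$ is off by one relative to $W$.

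\textbf{Step 1 (trailing-sum invariant).} I will show that for every $t$ and every $1\le L\le\min(W,t)$,
\[
S_{t,L}:=\sum_{s=t-L+1}^{t} r_s\ \le\ L\delta .
\]
Fix $t$ and $L$. The definition of $\delta'_t$ together with the hypothesis $r_t\le\delta'_t$ gives
\[
r_t\le L\delta-\sum_{s=t-L+1}^{t-1} r_s ,
\]
provided the outer $\max\{0,\cdot\}$ is inactive. Rearranging yields exactly $S_{t,L}\le L\delta$.

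\textbf{Step 2 (the $\max\{0,\cdot\}$ clip never binds).} This is the step I expect to be the main obstacle. If the inner minimum were negative, the clip would set $\delta'_t=0$ and force $r_t=0$, but the inequality in Step 1 would not follow, since the past sum alone would already exceed $L\delta$. To rule this out I will argue by strong induction on $t$ that the inner minimum is at least $\delta>0$. Assume the invariant of Step 1 holds at time $t-1$ for all admissible lengths. Then for each $L\ge 2$ the past sum satisfies
\[
\sum_{s=t-L+1}^{t-1} r_s=S_{t-1,L-1}\le (L-1)\delta ,
\]
so the corresponding bracket is at least $L\delta-(L-1)\delta=\delta$. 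For $L=1$ the bracket equals $\delta$ directly. The base case $t=1$ is immediate because the ring buffer is empty. Consequently $\delta'_t\ge\delta>0$ is a genuine residual, and the invariant propagates to time $t$. As a side remark, this shows the clip is merely defensive: it is needed in floating-point practice but not in exact arithmetic.

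\textbf{Step 3 (convexity).} Let $\hat p=\frac1L\sum_{s=t-L+1}^{t}\bar p_s$. Since $\mathrm{Div}(\cdot\|p_0)$ is convex in its first argument, Jensen's inequality gives
\[
\mathrm{Div}(\hat p\|p_0)\le \frac1L\sum_{s=t-L+1}^{t}\mathrm{Div}(\bar p_s\|p_0)=\frac{S_{t,L}}{L}\le\delta ,
\]
where the last inequality is Step 1. Both $\mathrm{TV}(\cdot\|p_0)$ (a norm of an affine map) and $\mathrm{KL}(\cdot\|p_0)$ (a sum of the convex functions $x\log(x/p_{0,c})$) are convex, so the hypothesis covers both cases. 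Windows with $L>t$ at start-up simply consist of fewer batches and are covered by the same argument with $L=t$. This completes the plan.
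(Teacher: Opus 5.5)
Your proof is correct and follows the same route as the paper's: read the trailing-sum bound $\sum_{s=t-L+1}^{t} r_s\le L\delta$ off the residual-budget definition, then finish with Jensen. Your Step~2 (ruling out the $\max\{0,\cdot\}$ clip) and your direct use of index $L$ are more careful than the paper's sketch, which ignores the clip and picks $L'=L-1$ with a sum that does not match the stated definition; note also that Step~2 actually shows the inner minimum equals $\delta$ exactly (the length-one bracket is $\delta$), i.e.\ $\delta'_t\equiv\delta$, so once that term is in the minimum (as in Alg.~\ref{alg:amnesia-replay}) you already have $r_s\le\delta'_s\le\delta$ for every $s$ and Jensen alone gives the bound for every window length, making the induction and the extension of the range up to $W$ unnecessary.
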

\begin{proof}
By Jensen, $\mathrm{Div}(\tfrac1L\sum_s \bar p_s\|p_0)\le \tfrac1L\sum_s r_s$. For a given window length $L\ge 2$, pick $L'=L-1$ in the scheduler definition to get
$r_t \le L'\delta - \sum_{s=t-L'}^{t-1} r_s$, hence
$\sum_{s=t-L+1}^{t} r_s \le (L-1)\delta$ and
$\mathrm{Div}(\tfrac1L\sum_s \bar p_s\|p_0) \le \tfrac{L-1}{L}\delta \le \delta$.
For $L=1$, the definition yields $\delta'_t\le \delta$ and thus $r_t\le \delta$.
\end{proof}

\section{Ablation Study}
\subsection{Calibration of Stealth Thresholds}
\label{app:calibration-tables}

\paragraph{Definitions (normalized metrics).}
Let \(r_t := \mathrm{Div}(\bar p_t\|p_0)/\delta\in[0,1]\) be the per‑batch divergence normalized by the audit radius \(\delta\).
Define \(r_{\text{batch@95}} := \operatorname{Quantile}_{0.95}\{r_t\}\) and, for window length \(W\), the normalized window average
\(\hat r_{t-L+1:t} := \frac{1}{L}\sum_{s=t-L+1}^{t}r_s\) with the window‑violation rate
\(r_{\text{win}} := \frac{1}{T}\sum_{t=1}^{T}\mathbb{1}\{\hat r_{t-W+1:t}>1\}\).
We report \(r_{\text{batch@95}}\) and \(r_{\text{win}}\) in the tables as \(\times 10^2\) (e.g., 5.0 \(\equiv\) 0.05).

\vspace{0.25em}
\paragraph{Acceptance rules (concise statements).}
\begin{enumerate}[leftmargin=1.4em,itemsep=1pt,topsep=2pt]
\item \textbf{Batch rule.} \(r_{\text{batch@95}}\le 0.05\) means 95\% of batches use at most 5\% of the audit ball. This is a \emph{normalized} (i.e., \(\delta\)-aware) tail‑control convention; it is interpretable across any \(\delta\) and provides headroom for rounding/availability noise. It is \emph{not} a window guarantee by itself.
\item \textbf{Window rule.} \(r_{\text{win}}\le 0.05\) tolerates at most 5\% of steps with window excess, on top of the scheduler that deterministically enforces sliding‑window compliance (stated next). The 0.05 allowance is operational slack for discretization/measurement noise.
\end{enumerate}

\vspace{0.25em}
\paragraph{Deterministic window guarantee (via residual budgets).}
\textbf{Proposition (Residual‑budget window compliance).}
Let \(\delta'_t\) be chosen as in Alg.~\ref{alg:amnesia-replay} (residual budget from the last \(W{-}1\) steps) and enforce
\(\mathrm{Div}(\bar p_t\|p_0)\le \delta'_t\) at every step. Then for all \(t\) and all \(1\le L\le W\),
\[
\sum_{s=t-L+1}^{t}\mathrm{Div}(\bar p_s\|p_0)\ \le\ L\,\delta
\quad\Longrightarrow\quad
\hat r_{t-L+1:t}\ \le\ 1,
\]
hence \(\mathrm{Div}(\hat p_{t-L+1:t}\|p_0)\le \delta\) by convexity of \(\mathrm{Div}\) in its first argument.
\emph{Proof sketch.} The residual update ensures the partial sums over any trailing window of length \(L\le W\) never exceed \(L\delta\).
Normalizing by \(\delta\) yields \(\hat r\le 1\), and convexity gives \(\mathrm{Div}(\hat p\|p_0)\le \frac1L\sum \mathrm{Div}(\bar p_s\|p_0)\le \delta\).\hfill\(\square\)

\begin{table}[h!]
    \centering
    % First Minipage: Batch Divergence
    \begin{minipage}{0.48\textwidth}
        \centering
        \caption{\textbf{Batch‑divergence calibration} (\(r_{\text{batch@95}}=0.05\)). KL concentrates below cutoff; TV splits evenly. \emph{Batch divergences at the 0.05 threshold (reported \(\times 10^2\)).}}
        \label{tab:calib-batch-threshold}
        \resizebox{\linewidth}{!}{%
        \begin{tabular}{l c c | c c}
            \hline
            \rowcolor{babyblue}
            \textbf{Dataset} &
            \multicolumn{2}{c|}{\textbf{KL}} &
            \multicolumn{2}{c}{\textbf{TV}} \\
            \hhline{~--|--}
            \rowcolor{babyblue}
             & \(\le\) 5.0 & \(>\) 5.0 & \(\le\) 5.0 & \(>\) 5.0 \\
            \hline
            CIFAR-10      & 4 & 0 & 2 & 2 \\
            CORe50        & 3 & 1 & 2 & 2 \\
            CIFAR-100     & 3 & 1 & 2 & 2 \\
            TinyImageNet  & 2 & 2 & 2 & 2 \\
            \hline
            \textbf{Total} & \textbf{12} & \textbf{4} & \textbf{8} & \textbf{8} \\
            \hline
        \end{tabular}}
    \end{minipage}
    \hfill % Gutter space
    % Second Minipage: Window Violation
    \begin{minipage}{0.48\textwidth}
        \centering
        \caption{\textbf{Window‑violation calibration} (\(r_{\text{win}}=0.05\)). All KL settings comply; TV has several excesses. \emph{Window violations at the 0.05 threshold (reported \(\times 10^2\)).}}
        \label{tab:calib-window-threshold}
        \resizebox{\linewidth}{!}{%
        \begin{tabular}{l c c | c c}
            \hline
            \rowcolor{babyblue}
            \textbf{Dataset} &
            \multicolumn{2}{c|}{\textbf{KL}} &
            \multicolumn{2}{c}{\textbf{TV}} \\
            \hhline{~--|--}
            \rowcolor{babyblue}
             & \(\le\) 5.0 & \(>\) 5.0 & \(\le\) 5.0 & \(>\) 5.0 \\
            \hline
            CIFAR-10      & 4 & 0 & 3 & 1 \\
            CORe50        & 4 & 0 & 3 & 1 \\
            CIFAR-100     & 4 & 0 & 3 & 1 \\
            TinyImageNet  & 4 & 0 & 2 & 2 \\
            \hline
            \textbf{Total} & \textbf{16} & \textbf{0} & \textbf{11} & \textbf{5} \\
            \hline
        \end{tabular}}
    \end{minipage}
\end{table}

\vspace{0.25em}
\paragraph{Why the thresholds are principled and adaptable.}
\begin{itemize}[leftmargin=1.4em,itemsep=1pt,topsep=2pt]
\item \textbf{Normalized to \(\delta\).} Both metrics are ratios to the audit ball; the same numerical acceptance (\(0.05\)) applies for any \(\delta\) and directly expresses “fraction of budget used.”
\item \textbf{Complementary roles.} The batch rule gives \emph{tail control}; the window rule is backed by a \emph{deterministic scheduler} ensuring \(\mathrm{Div}(\hat p\|p_0)\le\delta\) for all \(L\le W\).
\item \textbf{Empirical separation.} Tables~\ref{tab:calib-batch-threshold}–\ref{tab:calib-window-threshold} show that \(0.05\) cleanly partitions stealthy (KL) vs.\ aggressive (TV) regimes across datasets/models.
\end{itemize}

\vspace{0.25em}
\paragraph{Dependence on attacker budgets (monotone trends).}
Because the metrics are normalized, acceptance relates transparently to inputs:
\begin{itemize}[leftmargin=1.4em,itemsep=1pt,topsep=2pt]
\item \textbf{Stealth radius \(\delta\).} Larger \(\delta\) enlarges the feasible ball; optimal tilts spend more budget, typically raising normalized \(r_{\text{batch@95}}\) and \(r_{\text{win}}\) (as observed in the ablations).
\item \textbf{Replay mass \(m=\lfloor f\,n_{\text{aux}}\rfloor\) and class count \(C\).} Discretization scales like \(O(C/m)\); increasing \(f\) or \(n_{\text{aux}}\) (or reducing \(C\)) improves batch headroom and lowers violations.
\item \textbf{Window length \(W\).} Larger \(W\) tightens per‑step residuals \(\delta'_t\), reducing \(r_{\text{win}}\) (scheduler conservatism).
\end{itemize}

\noindent
\emph{Practical note.} Practitioners may tighten \(0.05\) if desired; the scheduler and normalization make such policy choices portable across \(\delta\) and datasets while preserving the guarantees above.

\subsection{Sensitivity to Imperfect / Lagged Baselines \texorpdfstring{$p_0$}{p0}}
\label{app:p0-sensitivity}

\paragraph{Setup (baseline mismatch).}
We evaluate the sensitivity of Amnesia to baseline mismatch by simulating an attacker that does not observe the auditor’s instantaneous nominal baseline. Concretely, at step $t$, the auditor computes $p_0(t)$ from the current replay-buffer label histogram (as in \S\ref{sec:datasets}), while the attacker uses a lagged estimate $\hat p_0(t) = p_0(t-k)$ when performing projection and quota realization. We then report impact (ACC, $-\mathrm{BWT}$) and \emph{auditor-side} stealth metrics computed against $p_0(t)$, i.e., $r_{\text{batch@95}}$ and $r_{\text{win}}$ defined in \S\ref{sec:visibility}. Table~\ref{tab:p0-sensitivity} summarizes the resulting impact--stealth trade-off as the lag $k$ increases.

\begin{table}[h!]
\centering
\caption{\textbf{Sensitivity to baseline mismatch (lagged $\hat p_0$).}
Representative setting: Split CIFAR-10 with ER-ACE under default budgets $(f{=}0.1,\ \delta{=}0.1,\ W{=}10)$.
The attacker uses $\hat p_0(t)=p_0(t-k)$ for projection/quota realization, while the auditor evaluates stealth against $p_0(t)$.
\emph{Stealth metrics are reported $\times 10^{-2}$ (e.g., $5.0 \equiv 0.05$).}}
\label{tab:p0-sensitivity}
\small
\setlength{\tabcolsep}{3.5pt}
\renewcommand{\arraystretch}{0.95}
\begin{tabular}{c | c c | c c}
\toprule
\multirow{2}{*}{Lag $k$ (steps)} &
\multicolumn{2}{c|}{\textbf{Amnesia (KL)}} &
\multicolumn{2}{c}{\textbf{Amnesia (TV)}} \\
\cmidrule(lr){2-3}\cmidrule(lr){4-5}
& \makecell{Impact\\ ACC$\downarrow$ \;($-\mathrm{BWT}\uparrow$)} &
\makecell{Stealth\\ $r_{\text{batch@95}}\downarrow$ \;($r_{\text{win}}\downarrow$)} &
\makecell{Impact\\ ACC$\downarrow$ \;($-\mathrm{BWT}\uparrow$)} &
\makecell{Stealth\\ $r_{\text{batch@95}}\downarrow$ \;($r_{\text{win}}\downarrow$)} \\
\midrule
0   & 34.2{\scriptsize$\pm$1.0}\ (56.9{\scriptsize$\pm$1.7}) & 1.0{\scriptsize$\pm$0.2}\ (0.1{\scriptsize$\pm$0.1})
    & 32.3{\scriptsize$\pm$0.95}\ (60.3{\scriptsize$\pm$1.8}) & 4.0{\scriptsize$\pm$0.4}\ (3.0{\scriptsize$\pm$0.8}) \\
50  & 34.5{\scriptsize$\pm$1.0}\ (56.2{\scriptsize$\pm$1.7}) & 1.4{\scriptsize$\pm$0.3}\ (0.2{\scriptsize$\pm$0.1})
    & 32.8{\scriptsize$\pm$1.0}\ (59.2{\scriptsize$\pm$1.9})  & 4.9{\scriptsize$\pm$0.6}\ (3.8{\scriptsize$\pm$0.9}) \\
200 & 35.3{\scriptsize$\pm$1.1}\ (54.5{\scriptsize$\pm$1.8}) & 2.6{\scriptsize$\pm$0.4}\ (0.4{\scriptsize$\pm$0.2})
    & 34.2{\scriptsize$\pm$1.1}\ (56.7{\scriptsize$\pm$2.0})  & 7.2{\scriptsize$\pm$0.9}\ (6.1{\scriptsize$\pm$1.3}) \\
500 & 36.8{\scriptsize$\pm$1.2}\ (51.8{\scriptsize$\pm$2.0}) & 5.4{\scriptsize$\pm$0.7}\ (0.9{\scriptsize$\pm$0.3})
    & 36.0{\scriptsize$\pm$1.2}\ (53.0{\scriptsize$\pm$2.2})  & 11.0{\scriptsize$\pm$1.4}\ (10.5{\scriptsize$\pm$2.0}) \\
\bottomrule
\end{tabular}
\end{table}

\paragraph{Findings.}
Table~\ref{tab:p0-sensitivity} shows a consistent degradation pattern under baseline mismatch: as the lag $k$ increases, attack impact diminishes (ACC increases and $-\mathrm{BWT}$ decreases) while auditor-side stealth metrics worsen (higher $r_{\text{batch@95}}$ and $r_{\text{win}}$). The effect is mild for small lags (e.g., $k\le 50$), where buffer histograms change slowly and $\hat p_0(t)$ remains close to $p_0(t)$, but becomes more pronounced for larger lags. The TV variant is more sensitive, exhibiting a sharper rise in window violations under large lag, consistent with TV’s more extreme mass transfers. Overall, these results support the interpretation that baseline mismatch effectively consumes part of the usable visibility margin; operating with a more conservative effective radius can mitigate violations at the cost of reduced impact.

\subsection{{Runtime Profiling Breakdown}}
\label{app:runtime-profile}

\paragraph{Profiling protocol.}
We instrument the sampler-side pipeline with wall-clock timers around three components executed once per replay batch:
(i) \textbf{Projection} (Alg.~\ref{alg:side_by_side_projections}),
(ii) \textbf{Rounding/Clipping} (largest-remainder rounding + availability clipping),
and (iii) \textbf{Audit-and-Fix} (unit-transfer loop ensuring $\mathrm{Div}(\bar p\|p_0)\le\delta'$).
We then sum each component over the full CIFAR-10 training run to obtain a whole-run breakdown that is directly comparable to the end-to-end times in Table~\ref{tab:runtime-cifar10-transposed}.

\begin{table}[h!]
\centering
\caption{\textbf{Whole-run runtime breakdown on CIFAR-10.}
Component-level overhead (seconds) for Amnesia relative to the baseline run in Table~\ref{tab:runtime-cifar10-transposed}.
Total runtime equals baseline plus the sum of the three overhead components.}
\label{tab:runtime-profile}
\small
\setlength{\tabcolsep}{6pt}
\renewcommand{\arraystretch}{0.95}
\begin{tabular}{l r r}
\toprule
\textbf{Component (whole training run)} & \textbf{Amnesia (KL)} & \textbf{Amnesia (TV)} \\
\midrule
Baseline training time (no attack) & 11{:}34 min & 11{:}34 min \\
\midrule
+ Projection & 9.8 s & 16.9 s \\
+ Rounding/Clipping & 4.3 s & 5.9 s \\
+ Audit-and-Fix & 9.9 s & 26.2 s \\
\midrule
Total attack overhead (sum above) & 24.0 s & 49.0 s \\
Total run time (baseline + overhead) & 11{:}58 min & 12{:}23 min \\
\bottomrule
\end{tabular}
\end{table}

\paragraph{Interpretation.}
Table~\ref{tab:runtime-profile} shows that the KL variant’s overhead is split fairly evenly between projection and audit-and-fix, reflecting the KL projector’s $O(C\cdot\text{iters})$ 1-D search and a small number of discrete feasibility swaps. In contrast, the TV variant’s overhead is dominated by (i) \textbf{projection}, due to sorting in two-sided water-filling ($O(C\log C)$), and (ii) \textbf{audit-and-fix}, because TV’s more extreme reallocations can require more unit transfers after integer rounding and availability clipping. In both cases, rounding/clipping is a minor fraction of overhead, and the total added wall-clock time remains modest compared to the end-to-end training run.
\end{document}